\newtheorem{thm}{Theorem}
\newtheorem{lem}[thm]{Lemma}
\newtheorem{prop}[thm]{Proposition}
\theoremstyle{definition}
\newtheorem*{defn*}{Definition}
\theoremstyle{remark}
\newtheorem{remark}{Remark}
\newcommand{\norm}[1]{\|#1\|}
\newcommand{\Real}{\Re}
\newcommand{\vvec}{\mathrm{vec}}
\newcommand{\RR}{\Re}
\newcommand{\bA}{\textbf{A}}
\newcommand{\bB}{\textbf{B}}
\newcommand{\bM}{\textbf{M}}
\newcommand{\bs}{\textbf{s}}
\newcommand{\by}{\textbf{y}}
\newcommand{\bLambda}{\boldsymbol{\Lambda}}
\newcommand{\bkappa}{\boldsymbol{\kappa}}
\newcommand{\bzeta}{\boldsymbol{\zeta}}
\newcommand{\bphi}{\boldsymbol{\phi}}
\newcommand{\bmu}{\boldsymbol{\mu}}
\def\argmin{\mathop{\rm argmin}}
\def\half{\frac{1}{2}}
\def\sv{\mathbf s}
\def\xv{\mathbf x}
\def\Bv{\mathbf B}
\newcommand{\kappav}{\mbox{\boldmath{$\kappa$}}}
\newcommand{\muv}{\mbox{\boldmath{$\mu$}}}
\newcommand{\nuv}{\mbox{\boldmath{$\nu$}}}
\newcommand{\Lambdav}{\mbox{\boldmath{$\Lambda$}}}
\newcommand{\Psiv}{\mbox{\boldmath{$\Psi$}}}
\newcommand{\Cc}{\mathcal{C}}
\newcommand{\Lc}{\mathcal{L}}
\newcommand\commentout[1]{}
\title{Small sphere distributions for directional data \\
           with application to medical imaging}
\author[1]{Byungwon Kim}
\author[2]{Stephan Huckemann}
\author[3]{J\"{o}rn Schulz}
\author[1]{Sungkyu Jung}
\affil[1]{Department of Statistics, University of Pittsburgh}
\affil[2]{Felix-Bernstein-Institute for Mathematical Statistics in the Biosciences, University of G\"{o}ttingen}
\affil[3]{Department of Electrical and Computer Engineering, University of Stavanger}
\date{\today}
\begin{document}
\maketitle

\begin{abstract}
We propose new small-sphere distributional families for modeling multivariate directional data on $(\mathbb{S}^{p-1})^K$ for $p \ge 3$ and $K \ge 1$. In a special case of univariate directions in $\Re^3$, the new densities model random directions on $\mathbb{S}^2$ with a tendency to vary along a small circle on the sphere, and with a unique mode on the small circle. The proposed multivariate densities enable us to model association among multivariate directions, and are useful in medical imaging, where multivariate directions are used to represent shape and shape changes of 3-dimensional objects. When the underlying objects are rotationally deformed under noise, for instance, twisted and/or bend, corresponding directions tend to follow the proposed small-sphere distributions. The proposed models have several advantages over other methods analyzing small-circle-concentrated data, including inference procedures on the association and small-circle fitting. We demonstrate the use of the proposed multivariate small-sphere distributions in analyses of skeletally-represented object shapes and human knee gait data.
\end{abstract}

\textbf{Keywords}:
Bingham-Mardia distribution, directional data,  skeletal representation, small circle, small sphere, likelihood ratio test, maximum likelihood estimation, von Mises-Fisher distribution.

\section{Introduction}

In medical imaging, accurately assessing and correctly diagnosing shape changes of internal organs is a major objective of a substantial challenge. Shape deformations can occur through long-term growth or necrosis as well as by short-term natural deformations. In view of surgery and radiation therapy, it is important to model all possible variations of object deformations by both long- and short-term changes, in order to control the object's exact status and shape at treatment time. \emph{Rotational deformations} such as rotation, bending, and twisting form a key sub-category of possible shape changes. For instance, shape changes of hippocampi in the human brain have been shown to mainly occur in the way of bending and twisting \citep{Joshi2002,Pizer2013}.

For the task of modeling 3D objects an abundance of approaches have been introduced. Closely related to our work are landmark-based shape models \citep{Cootes1992, Dryden1998, Kurtek2011} where a solid object is modeled by the positions of surface points, chosen either anatomically, mathematically or randomly.
A richer family of models is obtained by attaching directions  normal to the sampled surface points. More generally, in skeletal representations \citep[called \emph{s-reps}, ][]{Pizer2008}, an object is modeled by the combination of skeletal positions (lying on a medial sheet inside of the object) and spoke vectors (connecting the skeletal positions with the boundary of the object). In these models, describing the variation of rotational deformations can be transformed into a problem of exploring the motion of \emph{directional vectors} on the unit two-sphere. As argued in \cite{Schulz2015}, directional vectors representing rotational deformations tend to be concentrated on small circles on the unit sphere; a toy data example in Fig.~\ref{fig:smallcircle} shows a typical pattern of such observations.

\begin{figure}[t!]
	\begin{center}
	\begin{subfigure}{0.32\textwidth}
		\includegraphics[width = \textwidth]{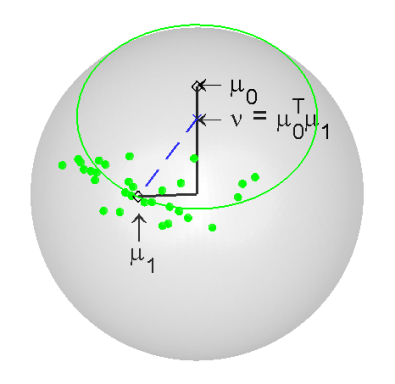}
		\caption{Data near a small circle}
		\label{fig:toyexample}
	\end{subfigure}
	\begin{subfigure}{0.32\textwidth}
		\includegraphics[width = \textwidth]{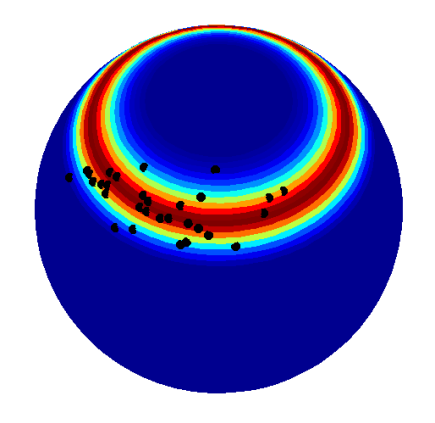}
		\caption{Bingham-Mardia}
		\label{fig:withBMdensity}
	\end{subfigure}
	\begin{subfigure}{0.32\textwidth}
		\includegraphics[width = \textwidth]{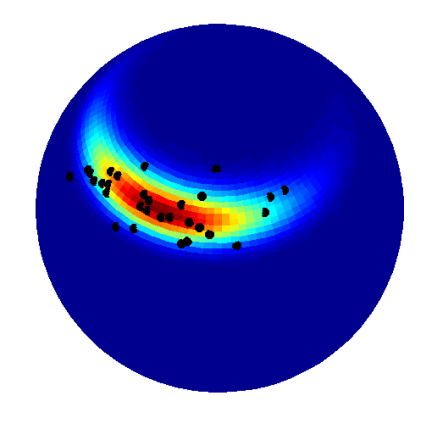}
		\caption{Proposed density}
		\label{fig:withS1density}
	\end{subfigure}
        \caption{(a) Toy example showing observations (solid green) distributed near a small circle $\mathcal{C}(\mu,\nu)$. The heat maps of fitted Bingham-Mardia density (b) and the proposed small-sphere density of the first kind (c) are overlaid. Red: high density, blue: low density. \label{fig:smallcircle}}
	\end{center}
\end{figure}

Motivated by the analysis of such s-rep data, we propose new distributional families and their multivariate extensions in order to model such directional data on the unit sphere $\mathbb{S}^{p-1} = \{x \in \Real^{p} \,|\, \|x\| = 1\}$ in arbitrary dimension $p \ge 3$. (Throughout the paper, $\|x\| = ( x^\top x)^{1/2}$ is the usual 2-norm of the vector $x$.) To precisely describe the targeted data situation, we define a $(p-2)$-dimensional \emph{subsphere} of $\mathbb{S}^{p-1}$ as the set of all points equidistant from $\mu \in \mathbb{S}^{p-1}$, denoted by
$$ \mathcal{C}(\mu,\nu) = \{ x \in \mathbb{S}^{p-1} \,|\, \delta(\mu,x) = \arccos(\nu) \}, \quad \nu \in (-1,1).$$
Here, $\delta(u,v)= \arccos(u^\top v)$ is the geodesic distance between $u, v \in \mathbb{S}^{p-1}$.
The subsphere is called a \emph{great subsphere} if $\nu = 0$ and a proper \emph{small subsphere} if $\nu \neq 0$. Note that $\mathcal{C}(\mu,\nu) \subset \mathbb{S}^{p-1}$ is well-defined for all $p > 1$. For the special case of $p = 3$, $\mathcal{C}(\mu,\nu)$ is a circle, a \emph{great circle} if $\nu = 0$, and a proper \emph{small circle} if $\nu \neq 0$.
To model the data in Fig.~\ref{fig:smallcircle}, one may naively use the Bingham-Mardia distribution \citep[called ``\emph{BM}'' hereafter, ][]{Bingham1978}, which is a family of densities on $\mathbb{S}^2$ with a modal ridge along a small circle. However, typical observations we encountered in applications do not uniformly spread over the full circle, and the BM distribution does not fit well, as shown in Fig.~\ref{fig:smallcircle}(b). 
Moreover, when by a single observation multiple directional vectors are provided, that is, data are on a polysphere $(\mathbb{S}^2)^K$, to the knowledge of the authors, there is no  tool available to date, to model  dependencies between directions.

In this paper, we propose two types of new distributional families for random directional vectors on $\mathbb{S}^{p-1}$, which we call \emph{small-sphere distributions} of the first ($\textrm{S1}$) and second ($\textrm{S2}$) kind. If $p = 3$, the proposed distributions may be called \emph{small-circle distributions}.
These two distributional families are designed to have higher densities on $\mathcal{C}(\mu,\nu)$ and to have a unique mode on $\mathcal{C}(\mu,\nu)$. An example of a small-sphere density, fitted to the toy data is shown in Fig.~\ref{fig:smallcircle}(c).
The new densities are natural extensions of the BM distribution with an additional term explaining a decay from a mode. If the additional term is a von Mises-Fisher (vMF) density on $\mathbb{S}^{p-1}$, we obtain the S1, which is a subfamily of the general Fisher-Bingham distribution \citep{Mardia1975, Mardia2000}. On the other hand, if the additional term is a vMF density on the small sphere ($\cong \mathbb{S}^{p-2}$), we obtain the S2 distribution, in which case the ``horizontal'' and ``vertical'' components of the directional vectors are independent of each other. 

Several multivariate extensions of the new distributions to $(\mathbb{S}^{p-1})^K$, $K \ge 2$, are discussed as well. In particular, we show that a special case, called MS2, of our multivariate extensions is capable of modeling \emph{dependent} random vectors. It has a straightforward interpretation, and we provide for fast estimation of its parameters. This MS2 distribution is specifically designed with s-rep applications in mind. In particular, s-rep data from rotationally-deformed objects have directional vectors that are ``rotated together,'' share a common axis of rotation, and are ``horizontally dependent'' (when the axis is considered to be vertically positioned). The component-wise independence of the S2 distributions plays a key role in this simple and interpretable extension. 
We discuss here likelihood-based parameter estimation and testing procedures of the multivariate distributions.

The contribution of this paper is summarized in Table~\ref{tab:summary}. While the new distributions contribute to the literature of directional distributions \citep{Mardia2000}, the proposed estimation procedures for the S1, S2 and MS2 parameters can be thought of as a method of fitting small-subspheres to data, which has been of separate interest. Nonparametric least-squares type solutions for such problem dates back to  \cite{Mardia1977}, \cite{GrayGeiser1980}, and \cite{Rivest1999}. \cite{Jung2012} proposed to recursively fitting small-subspheres in dimension reduction of directional and shape data. \cite{Pizer2013} proposed to combine separate small-circle fitting results in the analysis of s-rep data. In a similar spirit, \cite{Jung2011} and \cite{Schulz2015} also considered fitting  small-circles in applications to  s-rep analysis. In a simulation study, we show that our estimators provide smaller mean angular errors in small-circle fits than recent developments listed above. Moreover, our parametric framework provides a formal large-sample likelihood ratio test procedure, which is applicable to a number of hypothesis settings and is more powerful in testing dependence among multivariate directions than  competing methods.

\begin{table}
\centering
    \begin{tabular}{c|c|c|c}
                   & First kind (S1) & \multicolumn{2}{c}{ Second kind (S2)} \\
                   \hline
      Univariate   & S1              & \multicolumn{2}{c}{S2} \\
      Multivariate (indep.) & iMS1            & \multicolumn{2}{c}{iMS2}\\
      Multivariate (dep.) &     $\times$            & GMS2 & MS2 \\
      Simulation          & Gibbs sampling  &   $\times$   & Exact sampling \\
      Estimation          & Approximate m.l.e. & $\times$  & Approximate m.l.e. \\
      Hypothesis testing  & Likelihood ratio   & $\times$  & Likelihood ratio \\
     \end{tabular}
\caption{Small-sphere distributions (top three rows) and methods (bottom three rows) developed in this paper. Items with ``$\times$" mark are beyond the scope of this paper.}
\label{tab:summary}
\end{table}

The rest of the article is organized as follows. In Section~\ref{sec:small.circle.model}, we introduce the proposed densities of the S1 and S2 distributions and discuss their multivariate extensions including the MS2 distribution. Procedures of obtaining random variates from the proposed distributions are also proposed and discussed.
In Section~\ref{sec:estimation}, algorithms to obtain approximate maximum likelihood estimators of the parameters are  proposed and discussed. In Section~\ref{sec:testing}, we introduce several hypotheses of interest and procedures of large-sample approximate likelihood-ratio tests. In Section \ref{sec:simulation}, we empirically show that the proposed models are superior over other methods for small-circle estimation and for estimating dependencies in the multivariate setting, and show that the proposed testing procedures effectively prevent overfitting. 
In Sections~\ref{sec:application.examples-srep} and \ref{sec:applications.knee} we demonstrate applications of the new multivariate distributions to analyze models that represent human organs and knee motions. The appendix and the online supplementary material contain technical details and additional numerical results.

\section{Parametric small-sphere models}\label{sec:small.circle.model}

First we introduce two classical spherical densities, then we suitably combine them for our purposes.

\subsection{Two classical distributions on $\mathbb{S}^{p-1}$}\label{sec:backgrounds}
The von Mises-Fisher (vMF) distribution \citep[][p.168]{Mardia2000} is a fundamental unimodal and isotropic distribution for directions with density
\begin{equation}\label{eq:vmfdensity}
	f_{\textrm{vMF}}(x; \mu, \kappa) = \left( \frac{\kappa}{2} \right)^{p/2-1} \frac{1}{\Gamma(p/2)I_{p/2-1}(\kappa)} \exp\{ \kappa\mu^\top x \}, \quad x \in \mathbb{S}^{p-1}\,.
\end{equation}
Here, $\Gamma$ is the gamma function and $I_v$ is the modified Bessel function of the first kind and order $v$. The parameter $\mu \in \mathbb{S}^{p-1}$ locates the unique mode with $\kappa \geq 0$ representing the degree of concentration.

The Bingham-Mardia (BM) distribution was introduced by \cite{Bingham1978} to fit data in $\mathbb{S}^2$ that cluster near a small circle $\mathcal{C}(\mu,\nu)$. For an arbitrary dimension $p \ge 3$, the BM density is given by
\begin{equation}\label{eq:bmdensity}
	f_{\textrm{BM}}(x; \mu, \kappa, \nu) = \frac{1}{\alpha(\kappa,\nu)}\exp\{ -\kappa(\mu^\top x - \nu)^2\}, \quad x \in \mathbb{S}^{p-1},
\end{equation}
where $\alpha(\kappa,\nu) >0$ is the normalizing constant.

For our purpose of generalizing these distributions, we represent the variable $x \in \mathbb{S}^{p-1}$, $p\geq 3$,  by spherical angles $\phi_1,\ldots,\phi_{p-1}$ satisfying $\cos\phi_1= \mu^\top x$. Setting $s := \cos\phi_1 \in [-1,1]$ and  $\phi:=(\phi_2,\ldots,\phi_{p-1}) \in [0,\pi]^{p-3}\times [0, 2\pi)$, the random vector $(s,\phi)$ following the von Mises-Fisher (\ref{eq:vmfdensity}) or Bingham-Mardia (\ref{eq:bmdensity}) distribution has the respective density:
\begin{eqnarray}
	g_{\textrm{vMF}}(s,\phi; \kappa) &=& \left( \frac{\kappa}{2} \right)^{p/2-1} \frac{1}{\Gamma(p/2)I_{p/2-1}(\kappa)} \exp\{ \kappa s\}\,, \label{eq:vmf.canonical} \\
	g_{\textrm{BM}}(s,\phi; \kappa, \nu) &=& \frac{1}{\alpha(\kappa,\nu)}\exp\{ -\kappa(s - \nu)^2\}\,. \label{eq:bm.canonical}
\end{eqnarray}
In consequence, for both distributions,   $s$ and $\phi$ are independent, and the marginal distribution of $\phi$, which parametrizes a co-dimension 1 unit sphere $\mathbb{S}^{p-2}$, is uniform. In (\ref{eq:vmf.canonical}), the marginal distribution of $s$ is a shifted exponential distribution truncated  to $s \in [-1,1]$, while in (\ref{eq:bm.canonical}) the marginal distribution of $s$ is a normal distribution truncated to $s \in [-1,1]$. Both densities are isotropic, i.e. rotationally symmetric with respect to $\mu$. The vMF density is maximal at the mode $\mu$ and decreases as the latitude $\phi_1$ increases, while the BM density is uniformly maximal on the small-sphere $\mathcal{C}(\mu,\nu)$ and decreases as $\phi_1$ deviates from $\arccos (\nu)$.


\subsection{Small-sphere distributions of the first and second kind}\label{sec:define.S1.S2}

The proposed small-sphere densities of the first and second kind on $\mathbb{S}^{p-1}$, for $x=(x_1,\ldots,x_p) \in \mathbb{S}^{p-1}$ with parameters $\mu_0,\mu_1 \in \mathbb{S}^{p-1}$, $\nu = \mu_0^\top \mu_1 \in (-1,1)$, $\kappa_0 > 0$, $\kappa_1 > 0$, are given by
\begin{eqnarray}
    f_{\textrm{S1}}(x; \mu_0, \mu_1, \kappa_0, \kappa_1) &=& \frac{1}{a(\kappa_0, \kappa_1, \nu)} \exp\{-\kappa_0(\mu_0^\top x - \nu)^2 + \kappa_1 \mu_1^\top x\}\,, \label{eq:S1density} \\
    f_{\textrm{S2}}(x; \mu_0, \mu_1, \kappa_0, \kappa_1) &=& \frac{1}{b(\kappa_0, \kappa_1, \nu)} \exp\left\{ -\kappa_0(\mu_0^\top x - \nu)^2 + \kappa_1 \frac{\mu_1^\top P_{\mu_0} x}{\sqrt{\mu_1^\top P_{\mu_0} \mu_1 x^\top P_{\mu_0} x} } \right\} \,, \label{eq:S2density}
\end{eqnarray}
respectively, where $a(\kappa_0, \kappa_1, \nu)$ and $b(\kappa_0, \kappa_1, \nu)$ are normalizing constants. Here, $P_{\mu_0}$ denotes the matrix of orthogonal projection to the orthogonal \emph{complement} of $\mu_0$; $P_{\mu_0} = I_p - \mu_0\mu_0^\top$, where $I_p$ is the identity matrix. (In (\ref{eq:S2density}), we use the convention $0/0 = 0$.)

These distributions are well-suited to model observations that are concentrated near the small sphere $\mathcal{C}(\mu_0,\nu)$ but are not rotationally symmetric.
The first kind (\ref{eq:S1density}) is a natural combination of the vMF (\ref{eq:vmfdensity}) and BM (\ref{eq:bmdensity}) distributions.
The parameter $\mu_0$ represents the axis of the small sphere $\mathcal{C}(\mu_0,\nu)$, while $\mu_1$ gives the mode of the distribution, which, by the definition of $\nu$, is on the small sphere $\mathcal{C}(\mu_0,\nu)$. These parameters, $\mu_0, \mu_1$, $\nu$, are illustrated in Fig.~\ref{fig:smallcircle}(a) for the $p = 3$ case. The parameter $\kappa_0$ controls the \emph{vertical concentration} towards the small sphere (with an understanding that $\mu_0$ is arranged vertically).
In (\ref{eq:S1density}),  $\kappa_1$ controls the isotropic part of the concentration around the mode, forcing the density to decay from $\mu_1$. 

The rationale for the second kind (\ref{eq:S2density}) is better understood with a change of variables. Let us assume for now that $\mu_0 = (1,0,\ldots,0)^\top$. For any $x = (x_1,\ldots,x_p)^\top \in \mathbb{S}^{p-1}$, write $s := x_1 = \mu_0^\top x$.
If the spherical coordinate system $(\phi_1,\ldots,\phi_{p-1})$ as defined for (\ref{eq:bm.canonical}) is used, then $s = \cos \phi_1$. The ``orthogonal complement'' of $s$ is denoted by
\begin{equation}\label{eq:y_in_spherical}
y := (x_2,\ldots,x_p) / \sqrt{1-s^2} \in \mathbb{S}^{p-2},
\end{equation}
where the vector $y$ is obtained from the relation $P_{\mu_0}x/\|P_{\mu_0}x\| = (0,y) \in \mathbb{S}^{p-1}$. Similarly, define $\widetilde{\mu_1}\in \mathbb{S}^{p-2}$ as the last $p-1$ coordinates of $P_{\mu_0}\mu_1/\|P_{\mu_0}\mu_1\|$.
Then the random vector $(s,y)\in [-1,1]\times \mathbb{S}^{p-2}$ from the S1 or S2 has densities
\begin{eqnarray}\label{eq:s1canonical}
	g_{\textrm{S1}}(s,y; \mu_1,\kappa_0, \kappa_1) &=& \frac{1}{a(\kappa_0, \kappa_1, \nu)} \exp\left\{-\kappa_0(s - \nu)^2 + \kappa_1 \mu_1^\top \left(s,\sqrt{1-s^2}y\right)\right \}\,,\\
\label{eq:s2canonical}
	g_{\textrm{S2}}(s,y; \mu_1,\kappa_0, \kappa_1) &=& \frac{1}{b(\kappa_0, \kappa_1, \nu)} \exp\left\{ -\kappa_0(s - \nu)^2 + \kappa_1 \widetilde{\mu_1}^\top y \right\}\,,
\end{eqnarray}
respectively, for $s \in [-1,1],\, y\in \mathbb{S}^{p-2}$.
The subtle difference is that for the first kind (\ref{eq:s1canonical}), the ``vMF part'' (the second term in the exponent) is not statistically independent from the ``BM part'', while  it is true for the second kind (\ref{eq:s2canonical}). That is, $s$ and $y$ are independent only in the second kind. Accordingly, in  (\ref{eq:s2canonical}), $\kappa_1$ controls the \emph{horizontal concentration} towards the mode $\mu_1$.
The parameters $\mu_0, \mu_1$ and $\kappa_0$ of the second kind have the same interpretations as those of the first kind.

We use the notation $X \sim \textrm{S1}(\mu_0,\mu_1,\kappa_0,\kappa_1)$ and $Y\sim \textrm{S2}(\mu_0,\mu_1,\kappa_0,\kappa_1)$ for random directions $X,Y \in \mathbb{S}^{p-1}$ following small-sphere distributions of the first and second kind with parameters $(\mu_0,\mu_1,\kappa_0,\kappa_1)$, respectively.
The proposed distributions  are quite flexible and can fit a wide range of data. In Figure~\ref{fig:S1.examples}, we illustrate the S1 densities (\ref{eq:S1density}) with various values of the concentration parameters $\kappa_0, \kappa_1$. In all cases, the density is relatively high near the small circle $\mathcal{C}(\mu_0,\nu)$ and has the mode at $\mu_1 \in \mathcal{C}(\mu,\nu)$. Despite the difference in their formulations, the S2 densities (\ref{eq:S2density}) look similar to S1 densities for each fixed parameter-set. We refer to the online supplementary material for several visual examples of the S2 density.

\begin{figure}[tb!]
	\begin{center}
	\begin{subfigure}{0.2\textwidth}
		\includegraphics[width = \textwidth]{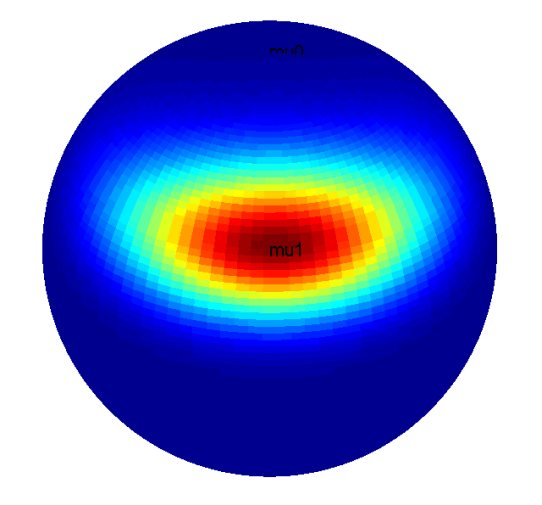}
		\caption{$\kappa_0 = 10$, $\kappa_1 = 4$}
	\end{subfigure}
	\begin{subfigure}{0.2\textwidth}
		\includegraphics[width = \textwidth]{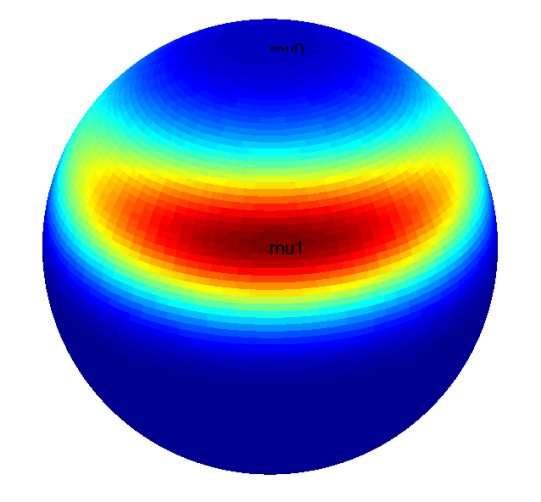}
		\caption{$\kappa_0 = 10$, $\kappa_1 = 1$}
	\end{subfigure}
	\begin{subfigure}{0.2\textwidth}
		\includegraphics[width = \textwidth]{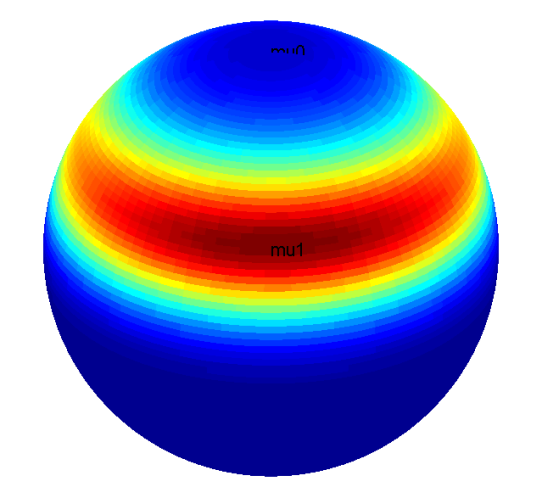}
		\caption{$\kappa_0 = 10$, $\kappa_1 = .5$}
	\end{subfigure} \\
	\begin{subfigure}{0.2\textwidth}
		\includegraphics[width = \textwidth]{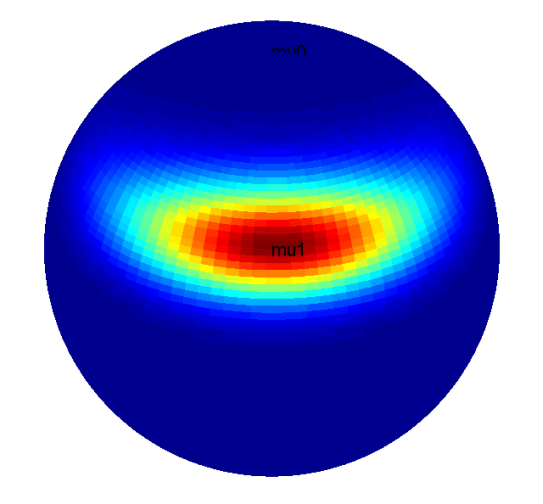}
		\caption{$\kappa_0 = 20$, $\kappa_1 = 4$}
	\end{subfigure}
	\begin{subfigure}{0.2\textwidth}
		\includegraphics[width = \textwidth]{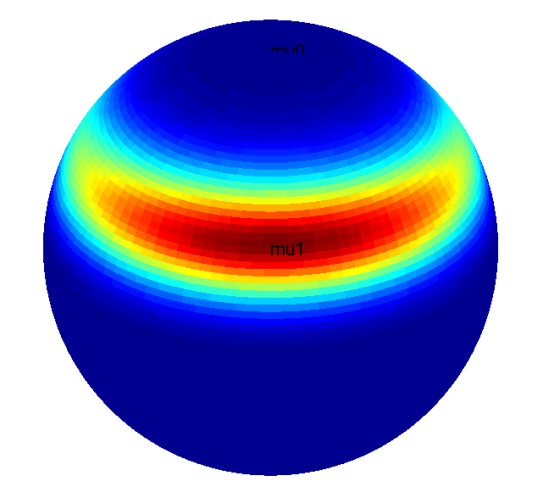}
		\caption{$\kappa_0 = 20$, $\kappa_1 = 1$}
	\end{subfigure}
	\begin{subfigure}{0.2\textwidth}
		\includegraphics[width = \textwidth]{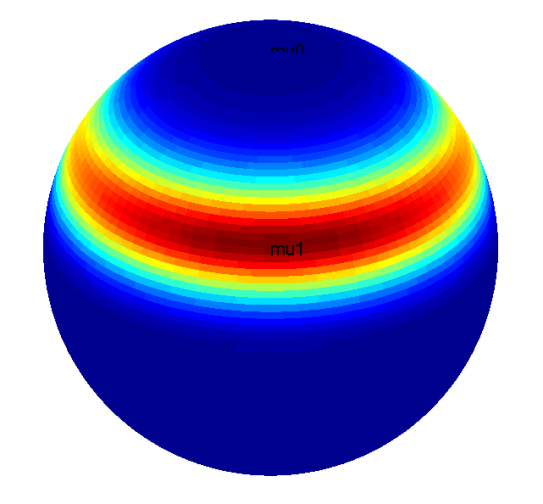}
		\caption{$\kappa_0 = 20$, $\kappa_1 = .5$}
	\end{subfigure} \\
	\begin{subfigure}{0.2\textwidth}
		\includegraphics[width = \textwidth]{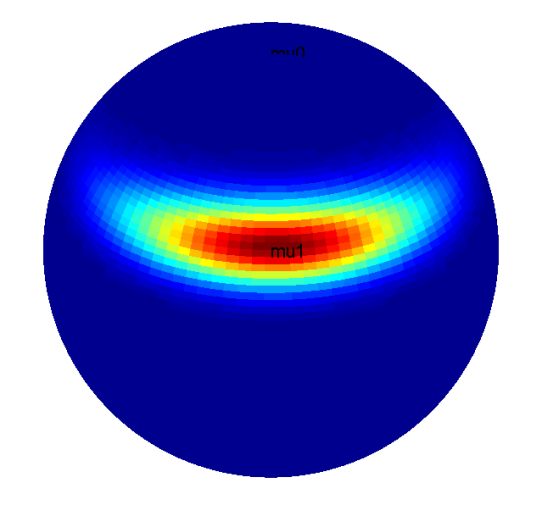}
		\caption{$\kappa_0 = 40$, $\kappa_1 = 4$}
	\end{subfigure}
	\begin{subfigure}{0.2\textwidth}
		\includegraphics[width = \textwidth]{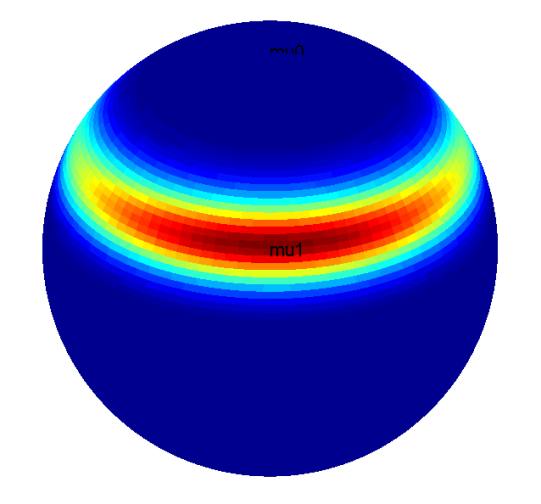}
		\caption{$\kappa_0 = 40$, $\kappa_1 = 1$}
	\end{subfigure}
	\begin{subfigure}{0.2\textwidth}
		\includegraphics[width = \textwidth]{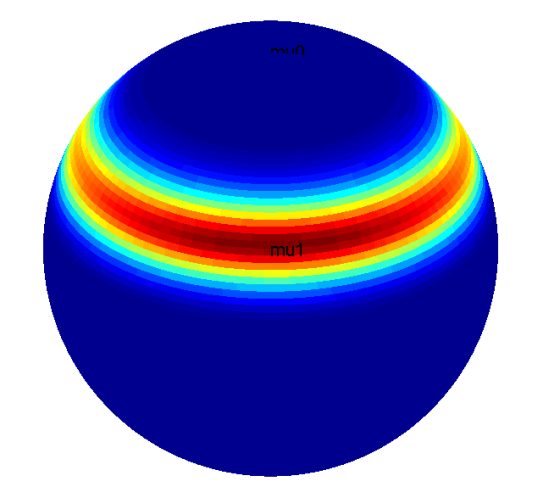}
		\caption{$\kappa_0 = 40$, $\kappa_1 = .5$}
	\end{subfigure}
		\caption{The S1 densities on $\mathbb{S}^{2}$ modeling non-isotropic small-circle distributions. High density (red), low density (blue). In all figures, $\mu_0$ points to the north pole and $\mu_1$ satisfies $\mu_0^\top \mu_1 = 1/2$.
Rows and columns correspond to different choices of concentration parameters ($\kappa_0$,  $\kappa_1$).\label{fig:S1.examples}}
	\end{center}
\end{figure}

The following invariance properties are proven in the Appendix.
\begin{prop}\label{prop:invariance}
    Let $X,Y \in \mathbb{S}^{p-1}$ be random directions with $X \sim \textrm{S1}(\mu_0,\mu_1,\kappa_0,\kappa_1)$ and $Y \sim \textrm{S2}(\mu_0,\mu_1,\kappa_0,\kappa_1)$ and let $B$ be a full rank $p \times p$ matrix.
    \begin{itemize}
        \item[(i)]
        If $B\mu_0 = \mu_0$ and $B\mu_1 = \mu_1$ then $X$ and $BX$ have the same distribution and so do $Y$ and $BY$.
        \item[(ii)]
        If $B$ is orthogonal and  $X,BX$ have the same distribution (or $Y,BY$ have the same distribution), then $B\mu_0 = \mu_0$
        and $B\mu_1 = \mu_1$.
        \item[(iii)]
        $X \sim \textrm{S1}(-\mu_0,\mu_1,\kappa_0,\kappa_1)$ and $Y \sim \textrm{S2}(-\mu_0,\mu_1,\kappa_0,\kappa_1)$.
    \end{itemize}
\end{prop}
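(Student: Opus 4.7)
The plan is to prove (i) and (iii) by direct calculation and deduce (ii) by identifying the mode and the modal ridge. Throughout I take the natural reading that the action $X\mapsto BX$ on $\mathbb{S}^{p-1}$ makes sense, forcing $B$ to be orthogonal (this is stated explicitly in (ii); it is also needed in (i), otherwise $BX\notin\mathbb{S}^{p-1}$). An orthogonal $B$ preserves the uniform surface measure on $\mathbb{S}^{p-1}$, so the density of $BX$ at $y\in\mathbb{S}^{p-1}$ equals the density of $X$ at $B^\top y$.

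For (i), I would substitute $B^\top y$ into (\ref{eq:S1density}) and use $B^\top\mu_0=\mu_0$, $B^\top\mu_1=\mu_1$ (consequences of $B\mu_0=\mu_0$, $B\mu_1=\mu_1$ and orthogonality) to leave every inner product in the exponent unchanged; since $a(\kappa_0,\kappa_1,\nu)$ depends on $(\mu_0,\mu_1)$ only through the preserved quantity $\nu=\mu_0^\top\mu_1$, the normalizing constant is untouched. For (\ref{eq:S2density}), the extra ingredient is that $B^\top$ and $P_{\mu_0}$ commute (a two-line check from $B^\top\mu_0=\mu_0$), which reduces $\mu_1^\top P_{\mu_0}B^\top y$ to $\mu_1^\top P_{\mu_0}y$ and $y^\top BP_{\mu_0}B^\top y$ to $y^\top P_{\mu_0}y$, preserving every piece of the S2 exponent.

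For (iii), I directly check that sending $\mu_0\mapsto -\mu_0$ (which sends $\nu\mapsto -\nu$) leaves each exponent invariant: in the S1 case, $(-\mu_0)^\top x-(-\nu)=-(\mu_0^\top x-\nu)$ has the same square and $\kappa_1\mu_1^\top x$ is untouched; in the S2 case, additionally $P_{-\mu_0}=P_{\mu_0}$. The normalizing constants then agree, which one confirms by the sphere symmetry $x\mapsto -x$ in the defining integral.

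For (ii), by (i) the distributional equality of $X$ and $BX$ under orthogonal $B$ is equivalent to $f(y;\mu_0,\mu_1,\kappa_0,\kappa_1)=f(y;B\mu_0,B\mu_1,\kappa_0,\kappa_1)$ for all $y\in\mathbb{S}^{p-1}$, so it suffices to identify $\mu_1$ and (the ray of) $\mu_0$ from the density. In (\ref{eq:S1density}) the first summand in the exponent is nonpositive with zero set exactly $\mathcal{C}(\mu_0,\nu)$, while the second summand $\kappa_1\mu_1^\top x$ is uniquely maximized on $\mathbb{S}^{p-1}$ at $x=\mu_1\in\mathcal{C}(\mu_0,\nu)$; both are simultaneously maximized at $\mu_1$, so it is the unique mode, forcing $B\mu_1=\mu_1$. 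Invariance of the modal ridge then gives $\mathcal{C}(B\mu_0,\nu)=\mathcal{C}(\mu_0,\nu)$, hence $B\mu_0=\pm\mu_0$; the minus sign is possible only when $\nu=0$ and is then consistent with (iii). The S2 argument is identical once one observes that on the small sphere the second summand of (\ref{eq:S2density}) is the cosine of the angle between $P_{\mu_0}\mu_1$ and $P_{\mu_0}x$, uniquely maximized at $x=\mu_1$. The main subtlety is this sign ambiguity in the $\nu=0$ case, a genuine feature of the model (indeed the content of (iii)); I would either state (ii) ``modulo the symmetry of (iii)'' or restrict to $\nu\neq 0$. Everything else is routine change-of-variables and mode identification.
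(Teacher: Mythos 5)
Your parts (i) and (iii) match what the paper does (it simply declares them ``trivial''), but your part (ii) takes a genuinely different route. The paper works with the Fisher--Bingham representation (\ref{eq:fbdensity}): equality of the densities of $X$ and $BX$ splits the exponent into its odd part $\gamma^\top x$ and even part $(\mu_0^\top x)^2$, giving $B\gamma=\gamma$ with $\gamma=2\nu\kappa_0\mu_0+\kappa_1\mu_1$ and $B\mu_0=\pm\mu_0$; the case $B\mu_0=-\mu_0$ is then excluded by computing $\|B\mu_1\|^2=1+8\nu^2(\kappa_0/\kappa_1)(1+2\kappa_0/\kappa_1)>1$, contradicting orthogonality. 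Your argument instead identifies $\mu_1$ as the unique mode and recovers $\mathcal{C}(\mu_0,\nu)$ as the zero set of the quadratic term after the vMF term is peeled off. Both work, but note two things. First, your phrase ``invariance of the modal ridge'' needs a small repair: the density is not actually ridged along $\mathcal{C}(\mu_0,\nu)$ (the vMF term tilts the maximum along each meridian), so you should first conclude $B\mu_1=\mu_1$ from the mode, then subtract the now-identified term $\kappa_1\mu_1^\top x$ from the log-densities to get $((B\mu_0)^\top x-\nu)^2=(\mu_0^\top x-\nu)^2+c$ on all of $\mathbb{S}^{p-1}$, whose common zero set is the small sphere; two hyperplanes meeting the sphere in the same $(p-2)$-sphere coincide, giving $B\mu_0=\pm\mu_0$ with the minus sign only if $\nu=0$. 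Second, your observation about the $\nu=0$ sign ambiguity is a genuine catch, not a defect of your proof: when $\nu=0$ the reflection $B=I_p-2\mu_0\mu_0^\top$ fixes $\mu_1$, sends $\mu_0$ to $-\mu_0$, and by (i) and (iii) preserves the distribution, so (ii) as stated fails there. The paper's own proof silently assumes $\nu\neq0$ --- its strict inequality $\|B\mu_1\|^2>1$ collapses to equality exactly at $\nu=0$ --- so your proposal to state (ii) modulo the symmetry of (iii), or to restrict to $\nu\neq0$, is the correct fix. The comparison buys the following: the paper's algebraic route is shorter and extends verbatim to the S2 case via the same even/odd decomposition, while your mode-identification route is more geometric and makes the source of the $\nu=0$ degeneracy transparent.
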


An example for the matrix $B$ in Proposition~\ref{prop:invariance}(i) is the {reflection} matrix $B = I_p - 2UU^\top$, where $U = [u_3, \ldots, u_p]$ is such that $[u_1,\ldots,u_p]$ is a $p\times p$ orthogonal matrix whose column vectors $u_1$ and $u_2$ generate $\mu_0$ and $\mu_1$.

\begin{remark}\label{remark:S1toFB}
The S1 distribution is a special case of the Fisher-Bingham distribution \citep{Mardia1975}. Following the notation of \cite{kent1982fisher}, the S1 distribution may be labeled as a \emph{FB$_6$ distribution}, in the special case of $p=3$, emphasizing the 6-dimensional parameter space.
In terms of the general parameterization of the Fisher-Bingham density \citep[cf.][p.174]{Mardia2000}, we write $\gamma = 2\kappa_0\nu\mu_0 + \kappa_1\mu_1$ and $A = \kappa_0\mu_0\mu_0^\top$, so that the S1 density (\ref{eq:S1density}) is expressed as
\begin{equation}\label{eq:fbdensity}
    f_{\rm S1}(x; \gamma, A) = \frac{1}{\alpha(\gamma, A)} \exp\{\gamma^\top x - x^\top Ax\},
\end{equation}
where
\begin{equation}\label{eq:fb-constant}
   \alpha(\gamma, A) = a(\kappa_0,\kappa_1,\nu)\exp\left\{\kappa_0 \nu^2\right\}.
\end{equation}
This relation to the general Fisher-Bingham distribution facilitates random data generation and maximum likelihood estimation, shown later in Sections~\ref{sec:random.generation} and \ref{sec:estimation.S1}.
\end{remark}

\subsection{Multivariate extensions}\label{sec:mult.extension}

The univariate small-sphere distributions (\ref{eq:S1density}--\ref{eq:S2density})  are now extended to model a tuple of \emph{associated} random directions, $\textbf{X}  = (X_{(1)},\ldots,X_{(K)}) \in (\mathbb{S}^{p-1})^K$.  We confine ourselves to a special case where the marginal distributions of $X_{(k)}$ have a common ``axis'' parameter $\mu_0$, but relaxing this condition is straightforward. We begin by introducing multivariate small-sphere distributions for independent random directions, denoted by iMS1 and iMS2.

\paragraph{Independent extensions.}
Suppose that, in the $K$-tuple of random directions $\textbf{X}$, each $X_{(k)} \in \mathbb{S}^{p-1}$ is marginally distributed as S1$(\mu_0, \mu_{1(k)}, \kappa_{0(k)},\kappa_{1(k)})$. Throughout, we assume that $\nu_{(k)} = \mu_0^\top\mu_{1(k)} \in (-1,1)$ so that the underlying small spheres do not degenerate. If the components of $\textbf{X}$ are mutually independent, then the joint density evaluated at $\textbf{x} \in (\mathbb{S}^{p-1})^K$ is 
\begin{equation}\label{eq:indep.mv.S1}
   f_{\rm iMS1}(  \textbf{x} ) \propto \exp\left\{ \Gamma^\top\textbf{x} - \textbf{x}^\top \textbf{A} \textbf{x} \right\}.
\end{equation}
Here, $\Gamma = [\gamma_{(1)},\ldots,\gamma_{(K)}]^\top$, where $\gamma_{(k)} = 2\kappa_{0(k)}\nu_{(k)}\mu_0 + \kappa_{1(k)}\mu_{1(k)}$, and $\textbf{A} = D_{\kappa_0} \otimes (\mu_0\mu_0^\top)$, where $D_{\kappa_0} = \textrm{diag}(\kappa_{0(1)},\ldots,\kappa_{0(K)})$. Each marginal density is of the form (\ref{eq:fbdensity}).

If each component is marginally distributed as S2$(\mu_0, \mu_{1(k)}, \kappa_{0(k)},\kappa_{1(k)})$, then writing the density in terms of $(s,y)$ as done for (\ref{eq:s2canonical}) facilitates our discussion. First, we decompose each $x_{(k)}$ into $s_{(k)} = \mu_0^\top x_{(k)}\in [-1,1]$ and $y_{(k)}\in \mathbb{S}^{p-2}$ as defined in (\ref{eq:y_in_spherical}). Further, we denote by $\widetilde{\mu_{1(k)}}$  the scaled projection of $\mu_{1(k)}$ as done for the univariate case.
Then an independent multivariate extension for the $\textrm{S2}$ model can be expressed as the joint density of $\textbf{s} = (s_{(1)},\ldots,s_{(K)})$ and $\textbf{y} = (y_{(1)},\ldots,y_{(K)})$, 
\begin{equation}\label{eq:indep.mv.S2}
  g_{\rm iMS2}(\textbf{s} , \textbf{y})  \propto \exp\left\{ H^\top\textbf{s} - \textbf{s}^\top D_{\kappa_0} \textbf{s} + \textbf{M}^\top \vvec(\textbf{y})\right\},
\end{equation}
where $H = (2\kappa_{0(1)}\nu_{(1)},\ldots, 2\kappa_{0(K)}\nu_{(K)})$ and $\textbf{M} = \vvec(\kappa_{1(1)}\widetilde{\mu_{1(1)}}, \ldots, \kappa_{1(K)}\widetilde{\mu_{1(K)}})$ while $\vvec(\cdot)$ denotes the column-wise vectorization of a matrix.

\paragraph{Vertical and horizontal dependence.}

Based on (\ref{eq:indep.mv.S1}) and (\ref{eq:indep.mv.S2}), we now contemplate on dependent models.
Obviously, if we allow in (\ref{eq:indep.mv.S1}) nonzero offdiagonal entries of $\bA$, then we obtain a dependent modification of the $\textrm{S1}$ model. With our applications in mind, however, we aim at modeling a specific structure of dependence  that is natural to the variables $(\bs,\by)$ in (\ref{eq:indep.mv.S2}).\

If $s_{(1)},\ldots,s_{(K)}$ are dependent, we speak of \emph{vertical dependence}; if  $y_{(1)},\ldots,y_{(K)}$ are dependent, we speak of \emph{horizontal dependence}. In practice, when we deal with small-circle concentrated directional data, association among these vectors usually occurs along small-circles with independent vertical errors.
For example, when a 3D object is modeled by skeletal representations, as described in more detail in Section~\ref{sec:application.examples-srep} and visualized in Fig.~\ref{fig:ellipsoids-all}, a deformation of the object is measured by the movements of directional vectors on $\mathbb{S}^2$. When a single rotational deformation  (such as bending, twisting or rotation) occurs, all the directions move along small-circles with a common axis $\mu_0$. In this situation, the longitudinal variations along the circles are dependent of each other because nearby spoke vectors are under the effect of similar deformations. (Examples of such longitudinal dependencies can be found in Section~\ref{sec:application.examples-srep} as well as in \cite{Schulz2015}.)
Adding such a horizontal (or longitudinal) dependence to a multivariate $\textrm{S1}$ model requires a careful introduction and parametrization of the offdiagonal entries of $\bA$ in (\ref{eq:indep.mv.S1}). This is not straightforward, and we leave it for future work.
On the other hand, it is feasible to extend the $\textrm{S2}$ model  by generalizing the ``vMF part'' of $\textbf{y}$, the last term in the exponent of (\ref{eq:indep.mv.S2}), to a Fisher-Bingham type.

To this end, we introduce a parameter matrix $\bB$ to model general quadratics in $\vvec(\textbf{y})$. This allows to write the densities for a general multivariate small-sphere distribution of the second kind ($\textrm{GMS2}$) as follows:
\begin{eqnarray}\label{eq:GMS2}\nonumber
\lefteqn{ g_{\textrm{GMS2}}(\bs,\by;H,D_{\kappa_0},\bM,\bB)}\\
&=& \frac{1}{T_1(H,D_{\kappa_0})T_2(\bM,\bB)}\exp\left\{ H^\top\textbf{s} - \textbf{s}^\top D_{\kappa_0} \textbf{s} + \textbf{M}^\top \vvec(\textbf{y}) + \vvec(\by)^\top  \bB\,\vvec(\by)\right\}
\end{eqnarray}
where $H,D_{\kappa_0}$ and $\bM$ as defined in (\ref{eq:indep.mv.S2}), and $T_1(H,D_{\kappa_0})$ and $T_2(\bM,\bB)$ are normalizing constants.
%
We set
$\bB = (B_{k,l})_{k,l=1}^K$, $B_{k,l} = (b^{(k,l)}_{i,j})_{i,j=1}^{p-1},$
as a block matrix with vanishing blocks $B_{k,k} =0$ on the diagonal. The submatrix $B_{k,l}$ models the horizontal association between $y_{(k)}$ and  $y_{(l)}$. 
The fact that $z^\top \bB z = z^\top \bB^\top z=\frac{1}{2} z^\top (\bB+\bB^\top )z$ for any vector $z \in \RR^{(p-1)K}$ allows us to assume without loss of generality  that $\bB$ is symmetric.

\paragraph{An $\textrm{MS2}$ distribution on $(\mathbb{S}^2)^K$.} As a viable submodel for the practically important case $p = 3$, we propose to use a special form for the offdiagonal blocks $B_{k,l}$ of $\bB$. In particular, with $\lambda_{k,l}$ representing the degrees of association between $y_{(k)}$ and  $y_{(l)}$, we set
\begin{eqnarray}\label{eq:Mardia-MS2}
B^{k,l} &=&
    2\begin{pmatrix}
      \widetilde{\mu_{1(k)}} & \widetilde{\mu_{2(k)}} \\
    \end{pmatrix}
      \begin{pmatrix}
      0 & 0 \\
      0 & \lambda_{k,l}\\
    \end{pmatrix}
    \begin{pmatrix}
      \widetilde{\mu_{1(l)}} & \widetilde{\mu_{2(l)}} \\
    \end{pmatrix}^\top \\
     &=& 2\lambda_{k,l} ~\widetilde{\mu_{2(k)}}\widetilde{\mu_{2(l)}}^\top,  \nonumber
\end{eqnarray}
where $\begin{pmatrix}
      \widetilde{\mu_{1(k)}} & \widetilde{\mu_{2(k)}} \\
    \end{pmatrix}$ is the rotation matrix given by setting
$$\widetilde{\mu_{2(k)}} = \begin{pmatrix}
      0 & -1 \\
      1 & 0 \\
    \end{pmatrix} \widetilde{\mu_{1(k)}}.$$
The density (\ref{eq:GMS2}) with the above parameterization of $\bB$ will be  referred to as a multivariate S2 distribution (MS2) for data on $(\mathbb{S}^2)^K$; its angular representation will be derived in (\ref{eq:dep.str.mult.S2density}) below.

Our choice of the simple parametrization (\ref{eq:Mardia-MS2}) does not restrict the modeling capability of the general model (\ref{eq:GMS2}), and has some advantages in parameter interpretations and also in estimation.  To see this, we resort to use an angular representation for $\by$ (available to this $p=3$ case). For each $k$, define $\phi_{(k)}$ and $\zeta_{(k)}$  such that $y_{(k)} = (\cos \phi_{(k)},\sin \phi_{(k)})^\top$ and $\widetilde{\mu_{1(k)}} = (\cos\zeta_{(k)},\sin\zeta_{(k)})^\top$. Accordingly, the inner products appearing in (\ref{eq:GMS2}) can be expressed as
\begin{eqnarray*}
 \widetilde{\mu_{1(k)}}^\top y_{(k)} =  \cos(\phi_{(k)} - \zeta_{(k)}), \quad  \widetilde{\mu_{2(k)}}^\top y_{(k)} =  \sin(\phi_{(k)} - \zeta_{(k)}).
\end{eqnarray*}
     Let
     $\boldsymbol{\phi} = (\phi_{(1)},\ldots,\phi_{(K)})^\top$,
     $\boldsymbol{\zeta} = (\zeta_{(1)},\ldots,\zeta_{(K)})^\top$,
     $\boldsymbol{\kappa}_1 = (\kappa_{1(1)},\ldots,\kappa_{1(K)})^\top$,
\begin{eqnarray*}
	c(\boldsymbol{\phi},\boldsymbol{\zeta}) &=& \left( \cos(\phi_{(1)} - \zeta_{(1)}),\ldots,\cos(\phi_{(K)} - \zeta_{(K)}) \right)^\top, \\
	s(\boldsymbol{\phi},\boldsymbol{\zeta}) &=& \left( \sin(\phi_{(1)} - \zeta_{(1)}),\ldots,\sin(\phi_{(K)} - \zeta_{(K)}) \right)^\top\,,
\end{eqnarray*}
and   $\bLambda = (\lambda_{k,l})_{k,l=1}^K$ where $\lambda_{k,l} (= \lambda_{l,k})$ for $k \neq l$ is the association parameter used in (\ref{eq:Mardia-MS2}), and $\lambda_{k,k}$ is set to zero. The density of the MS2 distribution, in terms of $(\bs, \boldsymbol{\phi})$, is then
 \begin{eqnarray}\label{eq:dep.str.mult.S2density}\nonumber
\lefteqn{ g_{\textrm{MS2}}(\bs,\bphi; H,D_{\kappa_0},\bkappa_1,\bzeta, \bLambda)}\\ &=& \frac{1}{T_1(H,D_{\kappa_0})T_3(\bkappa_1, \bLambda)}\exp\left\{ H^\top\textbf{s} - \textbf{s}^\top D_{\kappa_0} \textbf{s} + \boldsymbol{\kappa}_1^\top c(\boldsymbol{\phi},\boldsymbol{\zeta}) + \frac{1}{2}s(\boldsymbol{\phi},\boldsymbol{\zeta})^\top\boldsymbol{\Lambda}s(\boldsymbol{\phi},\boldsymbol{\zeta}) \right).
\end{eqnarray}

From (\ref{eq:dep.str.mult.S2density}), it can be easily seen that the ``horizontal angles'' $\bphi$ follow the \emph{multivariate von Mises} distribution \citep{Mardia2008} and are independent of the vertical component $\bs$.
As we will see later in Section~\ref{sec:estimation.S2}, this facilitates estimation for the MS2 distributions. Moreover,
since
\begin{eqnarray}\lefteqn{
\boldsymbol{\kappa}_1^\top c(\boldsymbol{\phi},\boldsymbol{\zeta}) + \frac{1}{2}s(\boldsymbol{\phi},\boldsymbol{\zeta})^\top\boldsymbol{\Lambda}s(\boldsymbol{\phi},\boldsymbol{\zeta})
} \nonumber\\
&=&
\sum_{k=1}^K \kappa_{1(k)}\left(1- \frac{(\phi_k - \zeta_k)^2}{2} \right) + \frac{1}{2}\sum_{k=1}^K\sum_{k\neq l =1}^K \Big(\lambda_{kl} (\phi_k - \zeta_k)(\phi_l - \zeta_l) \Big) + o(\norm{\boldsymbol{\phi}-\boldsymbol{\zeta}}^2),
\label{eq:largeconcentration}
\end{eqnarray}
for large enough concentrations, $\bphi$ is approximately multivariate normal with mean $\bzeta$ and precision matrix $\Sigma^{-1}$, where  $(\Sigma^{-1})_{kk} = \kappa_{1(k)}$ and $(\Sigma^{-1})_{kl} = -\lambda_{kl}$ for $1\leq k\neq l\leq K$. These parameters are naturally interpreted as partial variances and correlations. This interpretation of the parameters as entries of a precision matrix is most immediate under the  MS2, but is not under  the general case.

\subsection{Random data generation}\label{sec:random.generation}

Generating pseudo-random samples from the $\textrm{S1}$  and $\textrm{S2}$ distributions are important in simulations and in developments of computer-intensive inference procedures.

For simulation of the S1 (\ref{eq:S1density}) and iMS1 (\ref{eq:indep.mv.S1}) distribution, the fact that each marginal distribution of the iMS1 is a special case of the Fisher-Bingham is handy. Thereby, one can use the Gibbs sampling procedure developed for generating Fisher-Bingham-variate samples \cite{Hoff2009}.

For simulation of the S2 (\ref{eq:S2density}), iMS2 (\ref{eq:indep.mv.S2}), and MS2 (\ref{eq:dep.str.mult.S2density}) distribution, we take advantage of the independence between the pair ($\bs,\by$). As we assume vertical independence (i.e., $s_{(1)},\ldots,s_{(K)}$ are independent), each $s_{(k)}$ can be sampled separately. Therefore, sampling from the MS2 distribution amounts to independently drawing samples from a truncated normal distribution (for $s_{(k)}$) and from a multivariate von Mises distribution (for $\by$). Specifically, to sample $\xv = (x_{(1)},\ldots, x_{K)})$ from $\textrm{MS}2(\mu_0, \bmu_1, \bkappa_0,\bkappa_1,\bLambda)$, the following procedure can be used.

Step 1. For each $k$, sample $s_{(k)}$ from the truncated normal distribution with mean $\nu_{(k)}$ and variance $1/(2\kappa_{0(k)})$, truncated to  the interval (-1,1).

Step 2. For the S2 or iMS2 model, sample each $y_{(k)} \in \mathbb{S}^{p-2}$ in $\by = (y_{(1)},\ldots,y_{(K)})$ independently from the von Mises distribution with mean $(1,0,\ldots,0)$ and concentration $\kappa_{1(k)}$; for the MS2 distribution (when $p=3$), sample the $K$-tuple $\by \in (\mathbb{S}^{1})^K $ directly from the multivariate von-Mises distribution with mean $(1,0)$ and precision parameters  $\bkappa_1$ and $\bLambda$.

Step 3. For each $k$, let $E_{(k)}$ be a $p\times p$ orthogonal matrix with $(\mu_0, P_{\mu_0} \mu_{1(k)} / \| P_{\mu_0} \mu_{1(k)} \|)$ being the first two column vectors. Set $x_{(k)} = E_{(k)}^\top \left( s_{(k)}, (1 - s_{(k)}^2)^{-1/2} y_{(k)}  \right)$.

In our experiments, sampling from the S2 and MS2 distributions is much faster than from the S1. In particular, when the dimension $p$ or the concentration level is high, the Markov chain simulations for the S1 appear to be sluggish.
 Some examples of random samples from the S2, iMS2 and MS2 distributions are shown in Fig.~\ref{fig:random.S2}. The small-circles $C(\mu_0, \nu_{(k)})$ are also overlaid in the figure. Notably,  the MS2 sample in the rightmost panel clearly show a horizontal dependence. 
%
\begin{figure}[t!]
	\begin{center}
	\begin{subfigure}{0.3\textwidth}
		\includegraphics[width = \textwidth]{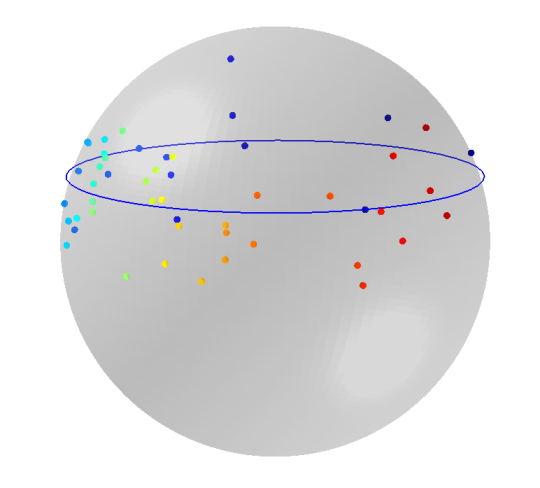}
		\caption{S2 on $\mathbb{S}^2$}
		\label{fig:S2.low.concentration}
	\end{subfigure}
	\begin{subfigure}{0.3\textwidth}
		\includegraphics[width = \textwidth]{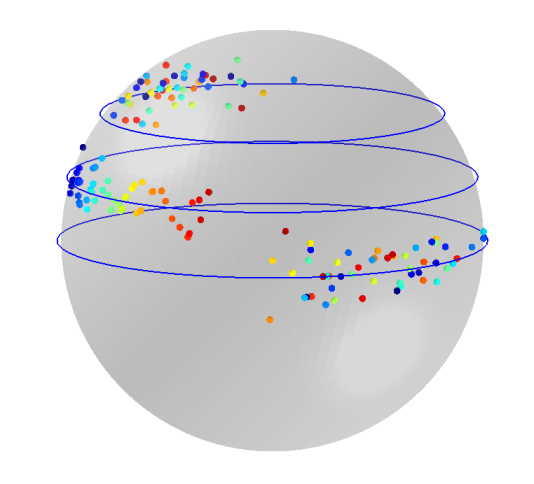}
		\caption{iMS2 on $(\mathbb{S}^2)^3$}
		\label{fig:S2.high.concentration}
	\end{subfigure}
	\begin{subfigure}{0.3\textwidth}
		\includegraphics[width = \textwidth]{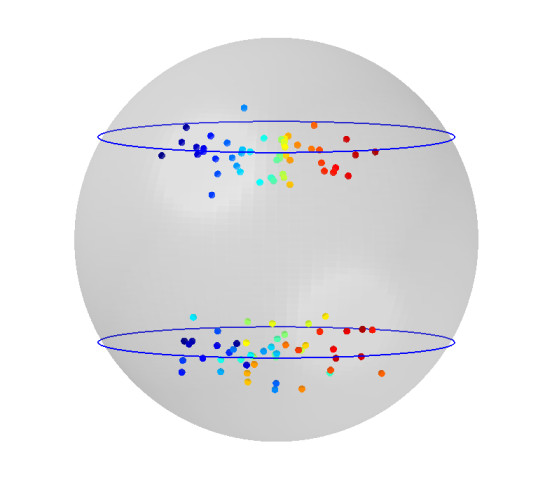}
		\caption{MS2 on $(\mathbb{S}^2)^2$}
		\label{fig:S2.dependent}
	\end{subfigure}
    \caption{Random samples from the S2, iMS2 and MS2 distributions. Same colors represent same observations. (a) low concentrations ($\kappa_0 = 10$, $\kappa_1 = 1$). (b)  independent directions with high concentrations ($\kappa_{0(k)} = 100$, $\kappa_{1(k)} = 10$, $k = 1,2,3$). (c)  horizontally dependent directions with high concentrations ($\kappa_{0(k)} = 50$, $\kappa_{1(k)} = 30$) and high dependence ($\lambda_{12} = 24$).} \label{fig:random.S2}
	\end{center}
\end{figure}

\section{Estimation}\label{sec:estimation}

%
%
%

\subsection{Approximate maximum likelihood estimation for  S1 and iMS1}\label{sec:estimation.S1}

The standard way to estimate parameters of the S1 is to use the maximum likelihood estimates (m.l.e). However, it does not seem possible to obtain explicit expressions of the m.l.e., partly due to having no closed-form expression of the normalizing constant (\ref{eq:S1density}). We propose to use an approximate m.l.e., obtained by iterating between updates for $\mu_0$ and for $(\mu_1, \kappa_0, \kappa_1)$, where each separate problem can be solved efficiently. The proposed estimation procedure, given below, only needs to specify an initial value for $\nu = \mu_0^\top \mu_1 \in (-1,1)$.
Our procedure naturally extends to estimation for the iMS1 distribution, which will also be discussed.

As a preparation, we first describe an approximation of the normalizing constant.
%
%
%
%
For this, 
we follow \cite{Kume2005}, who used saddle-point density approximations for approximating  normalizing constants of Fisher-Bingham distributions. 
The normalizing constant of the S1  has an alternative expression, as shown in the following.
\begin{prop}\label{prop:S1constant}
 For any $h>0$, let $\xi = (\frac{\nu(2\kappa_0+\kappa_1)}{2(\kappa_0+h)}, \frac{\kappa_1\sqrt{1-\nu^2}}{2h},0,\ldots,0)^\top \in \Real^p$ and let $\Psi$ be the $p\times p $ diagonal matrix with diagonal elements $(\kappa_0+h, h,\ldots,h)$. Moreover, let $g(r)$ $(r>0)$ be the probability density function of $R = Z^\top Z$, where $Z \sim N_p(\xi, \half\Psi^{-1})$. Then the normalizing constant $a(\kappa_0,\kappa_1,\nu)$ of (\ref{eq:S1density}) is
  \begin{equation}\label{eq:norm.const.FB2}
    a(\kappa_0,\kappa_1,\nu) = 2\pi^{p/2}|\Psiv|^{-1/2} g(1)\exp\left( \xi^T \Psi \xi +  h - \kappa_0 \nu^2  \right).
\end{equation}
\end{prop}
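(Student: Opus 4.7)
}
The plan is to follow the saddle-point-style device used in \cite{Kume2005}: first rewrite the S1 exponent as a quadratic in $x$ by exploiting the sphere constraint $\|x\|^2=1$, complete the square to recognize a multivariate normal kernel, and then express the resulting surface integral over $\mathbb{S}^{p-1}$ in terms of the density $g$ of $R=Z^\top Z$ evaluated at $1$. By the invariance in Proposition~\ref{prop:invariance}(i), I may assume without loss of generality that $\mu_0=e_1$ and $\mu_1 = \nu e_1 + \sqrt{1-\nu^2}\,e_2$; the normalizing constant depends only on $(\kappa_0,\kappa_1,\nu)$, so any orthogonal change of basis sending $(\mu_0,\mu_1)$ to this canonical pair leaves the integral unchanged. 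Under this reduction, the exponent of (\ref{eq:S1density}) becomes
\[
-\kappa_0 x_1^2 + (2\kappa_0+\kappa_1)\nu\,x_1 + \kappa_1\sqrt{1-\nu^2}\,x_2 - \kappa_0\nu^2.
\]

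Next I would use the identity $h = h\sum_{i=1}^p x_i^2$ valid on $\mathbb{S}^{p-1}$ to add and subtract the ``slack'' term $h\|x\|^2$. Rearranging, the exponent equals $-x^\top \Psi x + 2\xi^\top \Psi x + h - \kappa_0\nu^2$ with exactly the $\Psi$ and $\xi$ stated in the proposition, as can be verified by matching the linear terms: $2(\kappa_0+h)\xi_1 = (2\kappa_0+\kappa_1)\nu$ and $2h\xi_2 = \kappa_1\sqrt{1-\nu^2}$. Completing the square gives
\[
-(x-\xi)^\top \Psi(x-\xi) + \xi^\top \Psi\xi + h - \kappa_0\nu^2,
\]
so that the integrand is, up to the constant factor $\exp(\xi^\top\Psi\xi + h - \kappa_0\nu^2)$, a multiple of the Gaussian density of $Z \sim N_p(\xi,\tfrac12\Psi^{-1})$. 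Concretely, $\exp(-(x-\xi)^\top\Psi(x-\xi)) = \pi^{p/2}|\Psi|^{-1/2} f_Z(x)$.

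The remaining step is to show $\int_{\mathbb{S}^{p-1}} f_Z(u)\,du = 2 g(1)$. I would write $\int_{\mathbb{R}^p} \mathbf{1}_{\|z\|^2 \le t}\, f_Z(z)\,dz$ in polar coordinates $z = ru$ with $dz = r^{p-1}\,dr\,du$, then substitute $\tau = r^2$ to identify the density of $R=Z^\top Z$ as $g_R(\tau) = \tfrac{1}{2}\tau^{(p-2)/2}\int_{\mathbb{S}^{p-1}} f_Z(\sqrt{\tau}\,u)\,du$; evaluating at $\tau=1$ yields $g(1) = \tfrac{1}{2}\int_{\mathbb{S}^{p-1}} f_Z(u)\,du$. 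Assembling the pieces:
\[
a(\kappa_0,\kappa_1,\nu) = e^{\xi^\top\Psi\xi + h - \kappa_0\nu^2}\,\pi^{p/2}|\Psi|^{-1/2}\cdot 2g(1),
\]
which is exactly (\ref{eq:norm.const.FB2}). No step is genuinely hard here; the only mild bookkeeping is (a) verifying that the $\xi,\Psi$ in the statement are forced by the matching of the linear and quadratic parts once $h>0$ is fixed, and (b) getting the Jacobian right in the $r\mapsto r^2$ substitution, where the factor $1/2$ accounts for the ``$2g(1)$'' appearing in the final formula.
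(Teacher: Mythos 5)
Your argument is correct and follows essentially the same route as the paper: reduce to canonical coordinates via the invariance of the integral, absorb the slack term $h\|x\|^2=h$ to put the exponent in Fisher--Bingham form with the stated $\Psi$ and $\xi$, and identify the resulting spherical integral with $2g(1)$. The only difference is that the paper obtains the identity $\int_{\mathbb{S}^{p-1}} f_Z(u)\,du = 2g(1)$ by citing Proposition~1 of \cite{Kume2005}, whereas you re-derive it directly via the polar-coordinate substitution $\tau=r^2$, which makes the proof self-contained but is not a different method.
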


 In Proposition~\ref{prop:S1constant}, the function $g$ is the density of a linear combination of independent noncentral $\chi^2_1$ random variables. 
 Following \cite{Kume2005}, we use saddle-point density approximations in the numerical computation of $g(1)$. First, note that the derivatives of the cumulant generating function, $K_g(t) = \log \int_0^\infty e^{tr} g(r)dr$, associated with the density $g$ have closed-form expressions. Denoting by $K_g^{(j)}(t)$ the $j$th derivative of $K_g(t)$, for $j = 1,\ldots,4$, we get
  \begin{equation*}
    K_g^{(j)}(t) = \frac{(j-1)!}{2}\left( \frac{1}{(\kappa_0 + h - t)^j} + \frac{p-1}{(h-t)^j} \right) + \frac{j!}{4} \left( \frac{\nu^2(2\kappa_0 + \kappa_1)^2}{(\kappa_0 + h - t)^{j+1}} + \frac{\kappa_1^2(1 - \nu^2)}{(h - t)^{j+1}} \right).
\end{equation*}

Let $\hat{t}$ be the unique solution in $(-\infty, h)$ of the saddle-point equation $K_g^{(1)}(t) = 1$, which can be easily evaluated by using, e.g., a bisection method. Then a saddle-point density approximation of $g(1)$ is
  \begin{equation}\label{eq:norm.const.approx1}
\hat{g}(1) =  (2\pi K_g^{(2)}(\hat{t}))^{-1/2}\exp(K_g(\hat{t}) - \hat{t}+T),
\end{equation}
where
$    T =  {K_g^{(4)}(\hat{t})}/\{ 8 (K_g^{(2)}(\hat{t}))^2\} -   { 5 (K_g^{(3)}(\hat{t}))^2 }/ \{24{(K_g^{(2)}(\hat{t}))^{3}} \}$.
In the following, we approximate the value of $a(\kappa_0,\kappa_1,\nu)$ by $\hat{a}(\kappa_0,\kappa_1,\nu)$ obtained by plugging (\ref{eq:norm.const.approx1}) in place of $g(1)$ in (\ref{eq:norm.const.FB2}).

We are now ready to describe our estimation procedure. Suppose $x_1,\ldots,x_n$ is a sample from $\textrm{S1}(\mu_0,\mu_1,\kappa_0,\kappa_1)$. Given an initial value  $\hat\nu^{(0)} \in (-1,1)$, we iterate between steps 1 and 2 below until convergence.

\textbf{Step 1}: Updating $\mu_0$ given all other parameters, of which we only need $\nu$.
Suppose the inner product $\nu = \mu_0^\top\mu_1 \in (-1,1)$ is fixed. Then maximizing the likelihood function with respect to $\mu_0$ is equivalent to minimizing $\frac{1}{n} \sum^n_{i=1}(\mu_0^\top x_i - \nu)^2$ subject to the constraint $\mu_0^\top\mu_0 = 1$. With a Lagrangian multiplier $\lambda$ using matrix notation, we solve
\begin{equation}\label{eq:object.mu0.matrix}
    \min_{\mu_0 \in \mathbb{S}^{p-1}} \left[ \frac{1}{n} \|\mathbb{X}^\top\mu_0 - \nu 1_n\|^2 - \lambda(\mu_0^\top\mu_0 - 1) \right],
\end{equation}
where $\mathbb{X}$ is the $p \times n$ matrix whose $i$th column is $x_i$, yielding the necessary condition
$
    S\mu_0 - \nu\bar{x} - \lambda\mu_0 = 0,
$
where $S = \mathbb{X}\mathbb{X}^\top/n$, $\bar{x} = \frac{1}{n}\sum^n_{i=1}x_i$. For a fixed Lagrangian multiplier $\lambda$, the solution is $\hat{\mu}_0 = \nu(S - \lambda I_p)^{-1}\bar{x}$, provided that $S$ is of full rank. The constraint $\mu_0^\top\mu_0 = 1$ makes us find a root $\lambda$ of $\nu^2\bar{x}^\top(S - \lambda I_p)^{-2}\bar{x} - 1$. The root $\hat{\lambda}$ is found by a bisection search in the range $[-\nu^2\bar{x}^\top\bar{x}, \lambda_S]$, where $\lambda_S > 0$ is the smallest eigenvalue of $S$ \citep{Browne1967}. The solution to (\ref{eq:object.mu0.matrix}) is then
\begin{equation}\label{eq:step1_s1_mu0}
    \hat{\mu}_0 = \nu(S - \hat{\lambda}I_p)^{-1}\bar{x}.
\end{equation}
If $\nu = 0$, then $\hat{\mu}_0$ is the eigenvector of $S$ corresponding to the smallest eigenvalue.

\textbf{Step 2}: Updating $(\mu_1, \kappa_0, \kappa_1)$ given $\mu_0$. To facilitate the estimation of $\mu_1$, let
$E(X) \in \Real^p$ be the usual expected value of the random vector $X$ in the ambient space $\Real^p$. We use the fact that $\mu_1$ is a linear combination of $\mu_0$ and $\gamma_0 := E(X)/\|E(X)\|$ (due to Lemma~\ref{lem:ambientMeanLemma} in the Appendix)
and reparameterize $\mu_1$ by an angle $\varphi$ and the direction ${\gamma^*} = P_{\mu_0} \gamma_0 / \norm{P_{\mu_0} \gamma_0}$ orthogonal to $\mu_0$,
giving
\begin{equation}\label{eq:mu1.for.S1}
\mu_1 = \cos(\varphi) \mu_0 + \sin(\varphi) \gamma^*.
\end{equation}
With $\gamma_0$ estimated by $\hat\gamma_0 = \bar{x} / \norm{\bar{x}}$, we now optimize for $\varphi \in [0,2\pi)$ together with $\kappa_0, \kappa_1$, as follows. With $\mu_0$ and $\hat\gamma_0$ (thus $\hat\gamma^*$) given, the approximate negative log-likelihood with respect to $(\varphi, \kappa_0, \kappa_1)$ is
\begin{equation}\label{eq:approxlikeli}
    \tilde{\ell}_{\mu_0}(\varphi, \kappa_0, \kappa_1) = -n\log{\hat{a}(\kappa_0, \kappa_1, \varphi)} - \kappa_0\sum_{i=1}^n(\mu_0^\top x_i - \cos\varphi)^2 + \kappa_1\sum_{i=1}^n x_i^\top( \cos(\varphi) \mu_0 + \sin(\varphi) \hat\gamma^*).
\end{equation}
Numerically minimizing (\ref{eq:approxlikeli}) is much simpler than optimizing for $\mu_1$ with the nonlinear constraint $\norm{\mu_1} = 1$. We use a standard optimization package to obtain $\hat\varphi, \hat\kappa_0, \hat\kappa_1$ that minimizes (\ref{eq:approxlikeli}). We get $\hat\mu_1$ by substituting $(\gamma^*,\varphi)$ by $(\hat\gamma^*,\hat\varphi)$ in (\ref{eq:mu1.for.S1}) and $\hat{\nu} = \cos(\hat\varphi)$.

%

Let us now describe  an extension of the above algorithm to the iMS1 model. Suppose $(x_{i(1)},\ldots,x_{i(K)}) \in (S^{p-1})^K$ for $i = 1,\ldots,n$ is a sample from an iMS1 model, where each marginal distribution is  $\textrm{S1}(\mu_0 ,\mu_{1(j)} ,\kappa_{0(j)} ,\kappa_{1(j)} )$. While Step 2 above can be applied to update $\mu_{1(k)} ,\kappa_{0(k)} ,\kappa_{1(k)}$ given $\mu_0$, we modify Step 1 by replacing (\ref{eq:object.mu0.matrix}) with
\[
    \min_{\mu_0 \in \mathbb{S}^{p-1}} \left[\frac{1}{n}\sum_{j=1}^K (\kappa_{0(j)} \| \mathbb{X}_{(j)}^\top \mu_0 - \nu_{(j)} 1_n \|^2) - \lambda(\mu_0^\top\mu_0 - 1) \right],
\]
where the marginal $p\times n$ observation matrix $\mathbb{X}_{(j)}$ has the columns $x_{i(j)}$ $(i = 1,\ldots,n)$. This is solved with the obvious analog  to (\ref{eq:step1_s1_mu0}).


\subsection{Estimation via profile likelihood for S2, iMS2 and MS2}\label{sec:estimation.S2}
The S2 model and its extensions have the convenient property that the horizontal components are independent of the vertical ones.  To take advantage of this, suppose for now that $\mu_0$ is known. This allows to decompose an observation $x$ into two independent random variables $(s,y)$, which in turn leads to an easy estimation of the remaining parameters $\eta:= (\mu_1,\kappa_0,\kappa_1)$.
Thus our strategy of computing the m.l.e. proceeds in two nested steps. Let $\ell_n(\mu_0,\eta)$ be the negative likelihood function given a sample $x_1,\ldots,x_n$ from S2$(\mu_1,\eta)$.
In the \emph{outer step}, we update $\mu_0$ to maximize a profile likelihood, i.e.,
\begin{equation}\label{eq:S2est.outer}
\hat\mu_0 = \argmin_{\mu_0} \ell_n(\mu_0,\hat\eta_{\mu_0}),
\end{equation}
where evaluating
\begin{equation}\label{eq:S2est.inner}
\hat\eta_{\mu_0} = \argmin_{\eta} \ell_n(\mu_0,\eta)
\end{equation}
for a fixed $\mu_0$ is the \emph{inner step}.
It is straightforward to see that the m.l.e. of $(\mu_0,\eta)$ is given by $(\hat\mu_0, \hat\eta_{\hat\mu_0})$.

In the following, we discuss in detail the inner step (\ref{eq:S2est.inner}) of minimizing $\ell_{\mu_0}(\eta) := \ell_n(\mu_0,\eta)$ for the iMS2 model (\ref{eq:indep.mv.S2}) and for the MS2 model (\ref{eq:dep.str.mult.S2density}), while we resort to  a standard optimization package for solving (\ref{eq:S2est.outer}).  

\textbf{Independent multivariate S2 (iMS2).} Suppose $(x_{i(1)},\ldots,x_{i(K)}) \in (S^{p-1})^K$ for $i = 1,\ldots,n$ is a sample from an iMS2, where each marginal distribution is  S2$(\mu_0 ,\mu_{1(j)} ,\kappa_{0(j)} ,\kappa_{1(j)} )$. For a given $\mu_0$, the joint density can be written in terms of $(\sv_i,\bphi_i)$ as done in (\ref{eq:dep.str.mult.S2density}), but with $\bLambda = 0$. Furthermore, with $ \bkappa_0 = (\kappa_{0(1)},\ldots,\kappa_{0(K)})^\top $, $\nuv= (\nu_{(1)},\ldots,\nu_{(K)})^\top$, we can write
 $$ H^\top\textbf{s}_i - \textbf{s}_i^\top D_{\kappa_0} \textbf{s}_i = - (\boldsymbol{\kappa}_0 \circ (\textbf{s}_i - \boldsymbol{\nu}))^\top (\textbf{s}_i - \boldsymbol{\nu}),$$
 where $\circ$ denotes the element-wise product, and hence
\begin{equation}\label{eq:S2.Ttob}
\log \left[ T_1(H,D_{\kappa_0})T_3(\bkappa_1, 0) \right]= \sum_{j=1}^K \left[ \log b(\kappa_{0(j)} ,\kappa_{1(j)}, \nu_{(j)}) + \kappa_{0(j)} \nu_{(j)}^2 \right].
\end{equation}
 Note that the normalizing constant $b(\kappa_0,\kappa_1,\nu)$   satisfies
\begin{eqnarray*}\label{eq:S2constant}
	b(\kappa_0, \kappa_1, \nu) 
	&=&  \int_{0}^{2\pi} e^{\kappa_1 \cos\phi} d\phi \int_{-1}^{1} e^{-\kappa_0 (s - \nu)^2} ds  \nonumber \\
	&=&  (2\pi)^{3/2} (2\kappa_0)^{-1/2}\mathcal{I}_0(\kappa_1) \left[ \Phi((1-\nu)\sqrt{2\kappa_0}) - \Phi(-(1+\nu)\sqrt{2\kappa_0}) \right],
\end{eqnarray*}
where $\Phi(\cdot)$ is the  standard normal distribution function. Finally, the negative log-likelihood function (given $\mu_0$) is
\begin{equation}
\ell_{\mu_0}(\boldsymbol{\nu}, \boldsymbol{\zeta}, \boldsymbol{\kappa}_0, \boldsymbol{\kappa}_1; \{\textbf{s}_i, \boldsymbol{\phi}_i\}_{i=1}^n) = \ell_{\mu_0}^{(1)}(\boldsymbol{\nu}, \boldsymbol{\kappa}_0) + \ell_{\mu_0}^{(2)}(\boldsymbol{\zeta}, \boldsymbol{\kappa}_1), \label{eq:S2.nll.canonical}
\end{equation}
where
\begin{eqnarray}
\ell_{\mu_0}^{(1)}(\boldsymbol{\nu}, \boldsymbol{\kappa}_0) &=& \sum_{j=1}^K \left[ \kappa_{0(j)}\sum_{i=1}^n(s_{i(j)} - \nu_{(j)})^2 -\frac{n}{2}\log(2\kappa_{0(j)}) + \frac{n}{2}\log(2\pi) \right. \label{eq:S2.nll.canonical.1} \\
    && \left. + n\log\left( \Phi\left( (1-\nu_{(j)})\sqrt{2\kappa_{0(j)}} \right) - \Phi\left( -(1+\nu_{(j)})\sqrt{2\kappa_{0(j)}}\right) \right) \right], \nonumber \\
    \ell_{\mu_0}^{(2)}(\boldsymbol{\zeta}, \boldsymbol{\kappa}_1) &=& - \sum_{j=1}^K \left[ \kappa_{1(j)}\sum_{i=1}^n\cos(\phi_{i(j)} - \zeta_{(j)}) - n\log \mathcal{I}_0(\kappa_{1(j)}) - n\log(2\pi) \right]. \label{eq:S2.nll.canonical.2}
\end{eqnarray}
Therefore, the optimization for the inner step (\ref{eq:S2est.inner}) is equivalent to simultaneously solving $2K$ subproblems.

Each of the $K$ subproblems of (\ref{eq:S2.nll.canonical.1}) is equivalent to obtaining the m.l.e. of a truncated normal distribution $\textrm{trN}(\nu_{(j)}, (2\kappa_{0(j)})^{-1/2}; (-1,1))$
based on the observations $s_{i(j)}$ ($i = 1,\ldots,n$).
Similarly, each of the $K$ subproblems of (\ref{eq:S2.nll.canonical.2}) amounts to obtaining the m.l.e. of a von Mises distribution with mean $\zeta_{(j)}$ and concentration $\kappa_{1(j)}$ from the sample $\phi_{i(j)}$ ($i = 1,\ldots,n$). The m.l.e.s of the truncated normal are numerically computed, and we use the method of \cite{Banerjee2005} to obtain approximations of the m.l.e.s of the von Mises.


\textbf{MS2.}
Under the general $\textrm{MS2}$ model (\ref{eq:dep.str.mult.S2density}) with a dependence structure on $\bphi_i$, a decomposition
$
\ell_{\mu_0}(\boldsymbol{\nu}, \boldsymbol{\zeta}, \boldsymbol{\kappa}_0, \boldsymbol{\kappa}_1, \bLambda) = \ell_{\mu_0}^{(1)}(\boldsymbol{\nu}, \boldsymbol{\kappa}_0) + \ell_{\mu_0}^{(2)}(\boldsymbol{\zeta}, \boldsymbol{\kappa}_1,\bLambda)$, similar to (\ref{eq:S2.nll.canonical}), is valid, where (\ref{eq:S2.nll.canonical.2}) is replaced by

\begin{equation}\label{eq:ell2}
    \ell_{\mu_0}^{(2)}(\boldsymbol{\zeta}, \boldsymbol{\kappa}_1, \boldsymbol{\Lambda}) = - \sum_{i=1}^n \left[ \boldsymbol{\kappa}_1^\top c(\boldsymbol{\phi}_i,\boldsymbol{\zeta}) + \frac{1}{2}s(\boldsymbol{\phi}_i, \boldsymbol{\zeta})^\top \boldsymbol{\Lambda} s(\boldsymbol{\phi}_i, \boldsymbol{\zeta}) - \log T_3(\boldsymbol{\kappa}_1,\boldsymbol{\Lambda}) \right].
\end{equation}
Minimizing (\ref{eq:ell2}) is equivalent to computing the m.l.e. of the  multivariate von Mises distribution \citep{Mardia2008}. We either use maximum pseudo-likelihood estimate as discussed in \cite{Mardia2008} or moment estimates, yielding
\begin{equation}\label{eq:mvMF.MoM}
    \hat{\zeta}_{(j)} = \frac{1}{n}\sum_{i=1}^n \phi_{i(j)} / \| \frac{1}{n}\sum_{i=1}^n \phi_{i(j)} \| ,\quad \hat{\kappa}_{1(j)} = \bar{S}^{-1}_{jj}, \quad \hat{\lambda}_{(jk)} = \bar{S}^{-1}_{jk} \quad (j \neq k),
\end{equation}
where $\bar{S} = (\bar{S}_{jk})$ and $\bar{S}_{jk} = \frac{1}{n}\sum_{i=1}^n \sin(\phi_{i(j)} - \hat{\zeta}_{(j)})\sin(\phi_{i(k)} - \hat{\zeta}_{(k)})$ for $j,k  = 1,\ldots,K$. These estimates coincide with the m.l.e.s when $K=2$. For larger $K>3$, the accuracy of the moment estimates deteriorates, but evaluating m.l.e.s or a maximum pseudo-likelihood estimator becomes computationally highly expensive.

\section{Testing hypotheses}\label{sec:testing}
It is of interest to infer on the parameters of our models. In this section, we describe a large-sample testing procedure for several hypotheses of interest.

Our testing procedure is based on the likelihood ratio statistic, with effort devoted to an identification of the restricted parameter space $\Theta_0$, for each hypothesis and computing the maximized likelihood under $\Theta_0$.
Recall that the parameter space for the iMS1 and iMS2 models is given by
$\Theta_{\rm ind} = \mathbb{S}^{p-1} \times (\mathbb{S}^{p-1})^{K} \times (\Real_+)^{K} \times (\Real_+)^{K}$ for $\theta_{\rm ind} = (\mu_0, \muv_1, \kappav_0, \kappav_1)$.
For the more general GMS2 model including associations, we have $\Theta_{\rm GMS2} = \Theta_{\rm ind} \times ( \Real^{ (p-1)^2 } )^{K(K-1)/2}$ for $\theta_{\rm GMS2} = (\theta_{\rm ind}, \Bv)$. In the following, we describe our testing procedure using the MS2 distribution in dimension $p =3$, whose parameter space is $\Theta = (\theta_{\rm ind} \times (\Real)^{K(K-1)/2})$ for $\theta = (\theta_{\rm ind}, \Lambdav)$.
For some $\Theta_0$ that dictates a null hypothesis $H_0$ and satisfies $\Theta_0 \subset \Theta$, we denote the maximized log-likelihood under $\Theta_0$ by $\mathcal{L}_0$, and the maximized log-likelihood under $\Theta$ by $\mathcal{L}_1$. It is well-known that for this nested model, for large sample size $n$, $W_n := -2(\mathcal{L}_0 - \mathcal{L}_1)$ follows approximately a chi-square distribution with $q_1-q_2$ degrees of freedom, where $q_1 = (p-1)(K+1) + 2K + K(K-1)/2$ and $q_2$ are the dimensions of $\Theta$ and $\Theta_0$, respectively. Once $W_n$ is computed, as usual, our test rejects the null hypothesis for large enough values of $W_n$.

We are interested in the following null hypotheses, with the alternative being the full MS2 distribution. Testing the first hypothesis gives a test of association among directional vectors, while the latter two provide tests regarding the parameters of the underlying small-sphere  $\Cc(\muv_0,\nuv)$. In all three cases, the alternative is $H_1: \theta \in \Theta \setminus \Theta_0$.

1. \textbf{Test of association}. $H_0$: $\bLambda = \textbf{0}$, i.e., $\theta \in \Theta_0 = \mathbb{S}^{p-1} \times (\mathbb{S}^{p-1})^{K} \times (\Real_+)^{K} \times (\Real_+)^{K} \times \{ \textbf{0} \}$. Under $H_0$, the model degenerates to the iMS2 and there is no horizontal dependence.

2. \textbf{Test of axis}. $H_0$: $\mu_0 = \mu_0^*$, i.e., $\theta \in \Theta_0 = \{\mu_0^*\} \times (\mathbb{S}^{p-1})^{K} \times (\Real_+)^{K} \times (\Real_+)^{K} \times (\Real)^{K(K-1)/2}$.  This is to test whether a predetermined axis $\mu_0^*$ of the small sphere is acceptable.

3. \textbf{Test of great-sphere}. $H_0$: $\nuv = 0$, i.e., $\theta \in \Theta_0 \simeq \mathbb{S}^{p-1} \times (\mathbb{S}^{p-2})^{K} \times (\Real_+)^{K} \times (\Real_+)^{K} \times (\Real)^{K(K-1)/2}$. ($A \simeq B$ means that $A$ and $B$ are diffeomorphic.) This is to test whether the underlying spheres are great spheres with radius 1.

While the test of association (Hypothesis 1) is only available under the MS2 model ($p=3$), Hypotheses 2 and 3 can also be tested using S1, S2, iMS1 or iMS2 models in any dimension $p\geq 3$.
Moreover, to validate the use of small-sphere distributions, in any dimension $p\geq 3$, the following hypotheses can be tested. For simplicity, assume for now the S1 model with $\theta = (\mu_0,\mu_1,\kappa_0,\kappa_1) \in \Theta = (\mathbb{S}^{p-1})^2 \times (\Real_+)^2$.

4. \textbf{Test for von Mises-Fisher distribution}. $H_0$: $\kappa_0 = 0$, i.e., $\theta \in \Theta_0 \simeq \mathbb{S}^{p-1} \times \Real_+$. Under $H_0$, there is no ``small-circle feature.''

5. \textbf{Test for Bingham-Mardia distribution}. $H_0$: $\kappa_1 = 0$, i.e., $\theta \in \Theta_0 \simeq \mathbb{S}^{p-1} \times \Real_+$. Under $H_0$, there is no unique mode.

Hypotheses 4 and 5 can also be tested under the S2, iMS1, iMS2 and MS2 models. Additional care is needed when using S2, iMS2, or MS2 for Hypothesis 4, as the null distribution is the von Mises-Fisher on $\Cc(\muv_0,\nuv)$ (not on $\mathbb{S}^{p-1}$).

For each hypothesis, computing the test statistic $W_n$ requires to maximize the likelihood on $\Theta_0$ (or to compute $\Lc_0$). This is easily achieved by modifying the iterative algorithms in Section~\ref{sec:estimation}.
For example, for the test of association, computing $\Lc_0$ and $\Lc_1$ amounts to obtaining the m.l.e.s under the iMS2 and MS2 models, respectively; for Hypothesis 2 (test of axis), where $\mu_0$ is given, one only needs to solve (\ref{eq:S2est.inner}) once. Other cases of restricted m.l.e.s can be easily obtained.
In the online supplementary material, we confirm that the test statistic $W_n$ using our algorithms under the null hypotheses above are empirically nearly chi-square distributed for sample size $n = 30$. In Section~\ref{sec:isotropic example} and in the online supplementary material, empirical powers of the proposed test procedures are reported for several important alternatives.

\section{Numerical studies}\label{sec:simulation}


We demonstrate the performances of small-circle fitting in Section~\ref{sec:simulation.smallcircle},
the ability of the MS2 of modeling the horizontal dependence in Section~\ref{sec:simulation.association} and a testing procedure to prevent overfitting in Section~\ref{sec:isotropic example}.

\subsection{Estimation of small-circles}\label{sec:simulation.smallcircle}


The performance of our estimators in fitting the underlying small-spheres $\mathcal{C}(\mu_0,\nu)$ is numerically compared with those of competing estimators obtained from assuming the Bingham-Mardia (BM) distribution and the least-square estimates of \cite{Schulz2015}.
The BM distribution has originally been defined only for data on $\mathbb{S}^2$, but we use a natural extension given by a special case of the iMS1. Thus, ``BM estimates'' refer to the estimates of the iMS1 model with the restriction $\bkappa_1 = 0$. The estimates of Schulz et al. are obtained by minimizing the sum of squared angular distances from observations to $\Cc(\hat\mu_0,\hat\nu)$, which will be referred to as a ``least-squares (LS)'' method.

We first consider four univariate S2 models to simulate data concentrated on a small circle. The directional parameters $(\mu_0,\mu_1)$ are set to satisfy $\nu = 0.5$.
We use $(\kappa_0, \kappa_1) = (10, 1), (100, 1), (100, 10)$ to represent various data situations. Random samples from these three settings are shown in Fig.~\ref{fig:examples.simulation}(a)--(c).
We also consider the BM model as a special case of the S2 distributions (by setting $\kappa_1 = 0$); a sample from the BM distribution is shown in Fig.~\ref{fig:examples.simulation}(d).

\begin{figure}[t!]
	\begin{center}
	\begin{subfigure}{0.24\textwidth}
		\includegraphics[width = \textwidth]{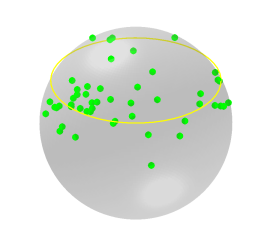}
        \caption{$\kappa_0 = 10, \kappa_1 = 1$}
	\end{subfigure}
	\begin{subfigure}{0.24\textwidth}
		\includegraphics[width = \textwidth]{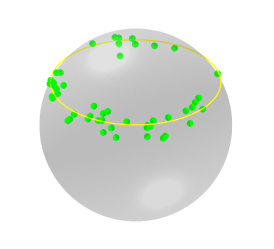}
        \caption{$\kappa_0 = 100, \kappa_1 = 1$}
	\end{subfigure}
	\begin{subfigure}{0.24\textwidth}
		\includegraphics[width = \textwidth]{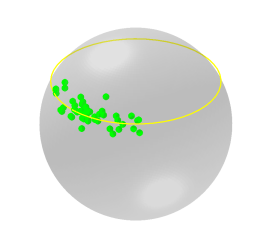}
        \caption{$\kappa_0 = 100, \kappa_1 = 10$}
	\end{subfigure}
	\begin{subfigure}{0.24\textwidth}
		\includegraphics[width = \textwidth]{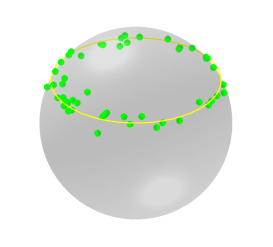}
        \caption{$\kappa_0 = 100, \kappa_1 = 0$}
	\end{subfigure}
    \caption{Random samples of size $n = 50$ from the S2 model on $\mathbb{S}^2$ used in our simulations. Small-circle estimation performances are reported in Table~\ref{tab:simul.S2.K1}.}\label{fig:examples.simulation}
    \end{center}
\end{figure}

The small-circle estimation performances of the S1, S2, BM, and LS estimates are measured by an angular product error (in degrees), defined as
\begin{equation}\label{eq:loss}
L\left((\mu_0,\nu), (\hat\mu_0,\hat\nu)\right) = \left(\mbox{Angle}(\hat\mu_0,\mu_0)^2 + \left( \frac{180}{\pi} (\arccos\hat\nu - \arccos\nu )\right)^2\right)^{1/2}.
\end{equation}
Table \ref{tab:simul.S2.K1} displays the means and standard deviations of ${L}\left((\mu_0,\nu), (\hat\mu_0,\hat\nu)\right)$ from 100 repetitions for each of the four methods, fitted to random samples of size 50 from each of the settings, labeled (a)--(d). The S2 estimates performed best for all non-trivial settings: (a), (b) and (c). Estimators under Settings (a) and (c) show higher standard errors. This is expected and due to either the large vertical dispersion or the smaller horizontal dispersion as visualized in Fig.~\ref{fig:examples.simulation}(a) and (c). Even when the sophisticated S1 and S2 models are not needed in Setting (d), the S1 and S2 estimators perform virtually as good as the BM estimator does.

\begin{table}[t!]
    $$\begin{array}{c|cccc}
           \mbox{Method}     & \mbox{(a)} & \mbox{(b)} & \mbox{(c)} & \mbox{(d)} \\
    \hline
    \textrm{S1} & 6.62(3.44) & 1.59(0.78) & 14.89(13.00) & \textbf{1.32}(0.56) \\
    \textrm{S2} & \textbf{6.06}(3.21) & \textbf{1.58}(0.76) & \textbf{14.57}(11.56) & 1.33(0.56) \\
    \textrm{BM} & 9.54(9.80) & 1.66(0.81) & 16.59(13.57) & \textbf{1.32}(0.56) \\
    \textrm{LS} & 6.48(3.37) & 1.61(0.75) & 14.68(11.54) & 1.33(0.55) \\
    \end{array}$$
    \caption{Small-circle estimation performances for univariate data on $\mathbb{S}^2$ from Fig.~\ref{fig:examples.simulation}. Means (standard deviations) of the angular product errors in degrees (\ref{eq:loss}) are shown.}\label{tab:simul.S2.K1}
\end{table}

Next, to show the performance of our multivariate models, we consider six bivariate MS2 models. The directional parameters $(\mu_0, \muv_1)$ were set to satisfy $\nuv = (0.5, -.3)$, and the concentration parameters were chosen to mimic the concentrations of the univariate models, described above. For Cases (a)--(c), we set $(\kappa_{0j},\kappa_{1j},\lambda_{12}) = (10, 1, 0), (100, 1, 0), (100, 10, 0)$, for $j = 1,2$, so that the models are indeed the iMS2. For the later three cases (d)--(f), we set $(\kappa_{0j},\kappa_{1j},\lambda_{12}) = (10, 2, 1.5), (100, 2, 1.5), (100, 20, 15)$, $j = 1,2$, to make their vertical and horizontal dispersions be similar to the iMS2 counterparts. By setting $\lambda_{12}>0$, the random bivariate directions are positively associated. (Examples of random samples from these settings can be found in Fig.~A6 in the online supplementary material.)
%
%
%
%
The small-circle estimation performance of the iMS1, iMS2, MS2, BM and LS estimates is measured by the canonical multivariate extension of the angular product error (\ref{eq:loss}). Table \ref{tab:simul.S2.K2} collects the means and standard deviations of the angular product errors 
from 100 repetitions
with the sample size $n = 50$.
Overall, the three proposed models (iMS1, iMS2, and MS2) show better or competitive performances in the axis and radii estimation. In particular, when directions are clearly concentrated on small-circles and are horizontally dependent, i.e., in Settings (e) and (f), the MS2 estimates shows better performances than others.

We check robustness against model misspecification of the estimators  by simulating data from a more general signal-plus-noise model
 (neither S1 nor S2). The performances of small-circle fitting of the proposed methods are comparable to that of the least-square estimator. Relevant simulation results and a detailed discussion can be found in the online supplementary material.

\begin{table}[t!]
\begin{center}
	$$\begin{array}{c|ccc|ccc}
    & \multicolumn{3}{c}{\mbox{Independent}} & \multicolumn{3}{c}{\mbox{Dependent}} \\
    \mbox{Method} & \mbox{(a)} & \mbox{(b)} & \mbox{(c)} & \mbox{(d)} & \mbox{(e)} & \mbox{(f)} \\
    \hline
    \textrm{iMS1} & 4.52(1.89) & 1.28(0.51) & \textbf{3.90(3.61)} & 7.01(2.88) & 1.58(0.76) & 4.71(5.39) \\
    \textrm{iMS2} & 4.45(1.71) & \textbf{1.27(0.51)} & 4.30(2.46) & \textbf{5.78(2.50)} & 1.58(0.75) & 4.60(2.69) \\
    \textrm{MS2} & \textbf{4.40(1.71)} & 1.28(0.51) & 4.26(2.45) & 5.90(2.46) & \textbf{1.57(0.75)} & \textbf{4.49(2.79)} \\
    \textrm{BM} & 5.15(2.36) & 1.28(0.52) & 8.63(4.66) & 17.02(21.46) & 1.69(0.83) & 10.77(5.64) \\
    \textrm{LS} & 4.48(1.86) & 1.28(0.53) & 4.30(2.35) & 6.81(3.20) & 1.59(0.71) & 4.62(2.82) \\
	\end{array}$$
    \caption{ Small-circles estimation performances for bivariate data on $\mathbb{S}^2$ from Fig.~A6 in the online supplementary material. Means (standard deviations) of the angular product errors in degrees (\ref{eq:loss}) are shown. }\label{tab:simul.S2.K2}
\end{center}
\end{table}

\subsection{Estimation of horizontal dependence}\label{sec:simulation.association}
The ability of the MS2 to model the horizontal dependence is an important feature of the proposed distributions. Here, we empirically confirm that the MS2 estimates provide accurate measures of horizontal dependence, using Cases (c) and (f) in Section~\ref{sec:simulation.smallcircle}. For sample sizes $n = 50$ and $200$, the concentration and association parameters were estimated under the assumption of MS2 (or iMS2), and  Table~\ref{tab:compare.models.concentration} summarizes the estimation accuracy. In all cases, the MS2 model provides precise estimations of the horizontal dispersion and dependence; as the sample size increases, the mean squared error decreases. For Case (c), the underlying model is exactly iMS2, so the iMS2 estimates have smaller mean squared errors than the MS2 estimates.
However, for Case (f), we notice that the iMS2 estimates of $\boldsymbol{\kappa}_1 = (\kappa_{11}, \kappa_{12})$ become inferior.
In fact, in case of existing horizontal dependence, i.e., when $\lambda_{12} \neq 0$, the concentration parameters $\kappav_1$ in the misspecified iMS2 model do not correctly represent the concentrations as correctly represented by the MS2 model.
%
This is so because the marginal distribution of $\phi_j$, $j = 1,2$, in (\ref{eq:dep.str.mult.S2density}) is not a von Mises distribution \citep{Shing2002, Mardia2008}.

\begin{table}[t!]
\begin{center}
	$$\begin{array}{cc|ccc|ccc}
	 & & \multicolumn{3}{c}{\mbox{(c)}} & \multicolumn{3}{c}{\mbox{(f)}} \\
    n & \mbox{Method} & \kappa_{11} = 10 & \kappa_{12} = 10 & \lambda_{12} = 0 & \kappa_{11} = 20 & \kappa_{12} = 20 & \lambda_{12} = 15 \\
    \hline
    \multirow{2}{*}{50} & \textrm{iMS2} & 10.23(2.48) & 10.54(2.16) &  & 11.73(2.19) & 11.19(2.14) &  \\
     & \textrm{MS2} & 10.48(2.54) & 10.80(2.27) & -0.17(1.85) & 22.63(5.06) & 21.40(4.32) & 16.82(4.22) \\
   \multirow{2}{*}{200} & \textrm{iMS2} & 10.31(1.08) & 10.10(1.04) &  & 11.00(1.05) & 11.09(1.04) &  \\
     & \textrm{MS2} & 10.35(1.10) & 10.14(1.05) & -0.12(0.73) & 20.38(2.17) & 20.54(2.25) & 15.41(2.06) \\
	\end{array}$$
    \caption{ Concentration and association parameter estimates for bivariate data on $(\mathbb{S}^2)^2$ from Fig. A6 in the online supplmentary material. Means (standard deviations) of the estimates (from 100 repetitions). The column headings show the true parameters.}\label{tab:compare.models.concentration}
\end{center}
\end{table}

\subsection{Detecting overfitting in an isotropic case}\label{sec:isotropic example}

When the data do not exhibit a strong tendency of a small-circle feature, the S1 and S2 distributions may overfit the data. 
For example, to a random sample from an isotropic vMF distribution as shown in Fig.~\ref{fig:examples.isotropy}(a), the S1 or S2 model fits an unnecessary small-circle $\mathcal{C}(\hat\mu_0,\hat\nu)$. Indeed, a small-circle fit was observed in 83\% of simulations of fitting the S1 model. Using the BM model or the LS results in a similar overfitting, where very small circles are erroneously fitted for 100\% and 68\% of the simulations, for the BM and LS, respectively.

This problem of overfitting has been known for a while and discussed in the context of dimension reduction of directional data. In particular, \cite{Jung2011,Jung2012} and \cite{Eltzner2015} investigated the overfitting phenomenon for the least-square estimates and proposed some ad-hoc methods for adjustment.
To prevent the overfitting, we point out that the testing procedure in Section~\ref{sec:testing} for the detection of isotropic distributions (Hypothesis 4) works well. To confirm this, we evaluated the empirical power of the test at the significance level $\alpha = 0.05$ for several alternatives. The power increases sharply as the distributions become more anisotropic; under the alternative distributions depicted in Fig.~\ref{fig:examples.isotropy}(b)--(d), the empirical powers are respectively $\hat\beta = 0.435, 1$ and $1$, evaluated from 200 repetitions.

\begin{figure}[t!]
	\begin{center}
	\begin{subfigure}{0.24\textwidth}
		\includegraphics[width = \textwidth]{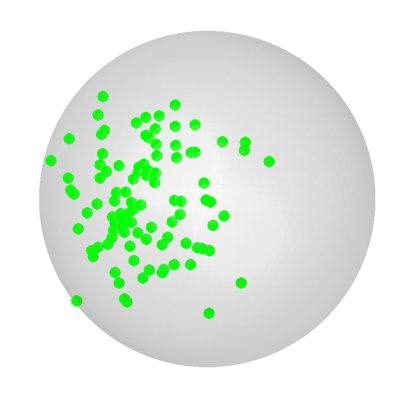}
        \caption{ vMF ($\kappa = 10$)}
	\end{subfigure}
	\begin{subfigure}{0.24\textwidth}
		\includegraphics[width = \textwidth]{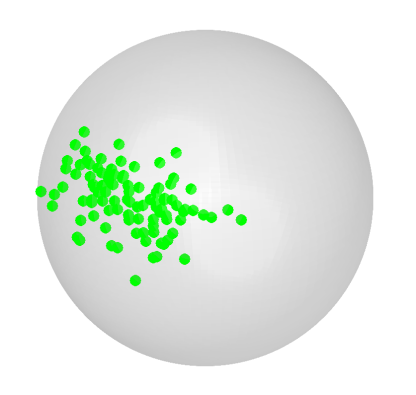}
        \caption{$(\kappa_0, \kappa_1) = (20, 10)$}
	\end{subfigure}
	\begin{subfigure}{0.24\textwidth}
		\includegraphics[width = \textwidth]{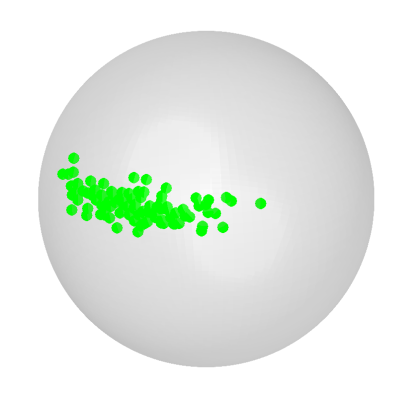}
        \caption{$(\kappa_0, \kappa_1) = (100, 10)$}
	\end{subfigure}
	\begin{subfigure}{0.24\textwidth}
		\includegraphics[width = \textwidth]{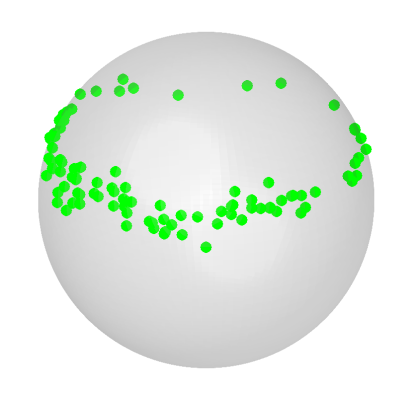}
        \caption{$(\kappa_0, \kappa_1) = (100, 1)$}
	\end{subfigure}
    \caption{Degrees of the ``small-circle feature.'' Shown are random samples from an isotropic distribution (case (a)),  and the S2 distributions with increasing ``small-circle concentrations'' (cases (b)--(d)). }\label{fig:examples.isotropy}
    \end{center}
\end{figure}

\section{Analysis of s-rep data}\label{sec:application.examples-srep}
In this section, an application of the proposed distributions and test procedures to s-rep data  is discussed.

\subsection{Modeling rotationally-deformed ellipsoids via s-reps}
Skeletal representations (s-reps) have been useful in mathematical modeling of human anatomical objects \citep{Pizer2008}. Roughly, an s-rep model for a 3-dimensional object consists of locations of a skeletal mesh (inside of the object) and spoke vectors (directions and lengths), connecting the skeletal mesh with the boundary of the object; examples are shown in the top left panel of Fig.~\ref{fig:ellipsoids-all}.
When the object is ``rotationally deformed", \cite{Schulz2015} have shown that the directional vectors of an s-rep model approximately trace a set of concentric small-circles on $\mathbb{S}^2$, as shown in the top panels of the figure.
Such rotational deformations  (e.g., rotation, bending and twisting) of human anatomical objects have been observed in between and within shape variations of hippocampi and prostates \citep{Joshi2002,Jung2011,Pizer2013}.
We demonstrate a use of the MS2 distribution in modeling (and fitting) a population of such objects via s-reps. Note that the sample space of an s-rep with $K$ spokes is $(\mathbb{S}^2)^K \times \Real_+^K \times (\Real^3)^K$ (for direction, length, and location). In this work, we choose to analyze the spoke directions in $(\mathbb{S}^2)^K$ only, leaving a full-on analysis, accommodating the lengths and locations, to future work.

\begin{figure}[t!]
	\begin{center}
    \begin{subfigure}{0.6\textwidth}
        \includegraphics[width = \textwidth]{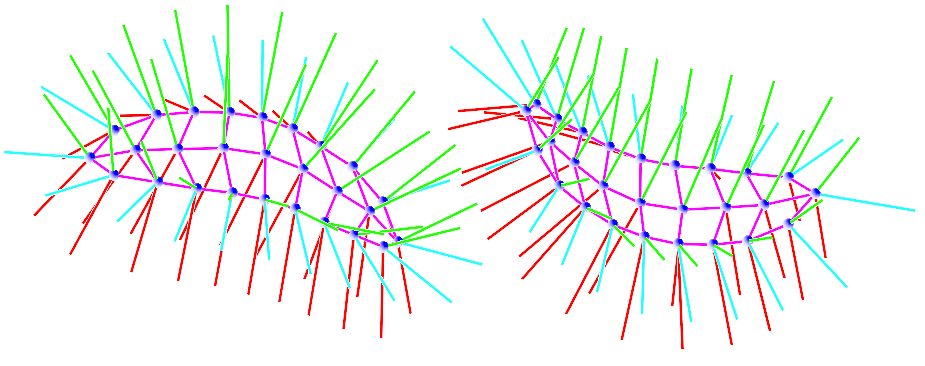}
        \caption{}
    \end{subfigure}
    \begin{subfigure}{0.35\textwidth}
        \includegraphics[width = \textwidth]{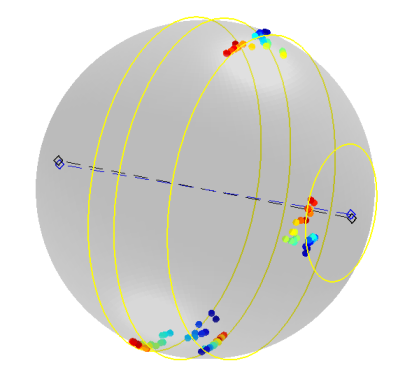} \\
        \caption{}
    \end{subfigure}
    \begin{subfigure}{0.45\textwidth}
        \includegraphics[width = \textwidth]{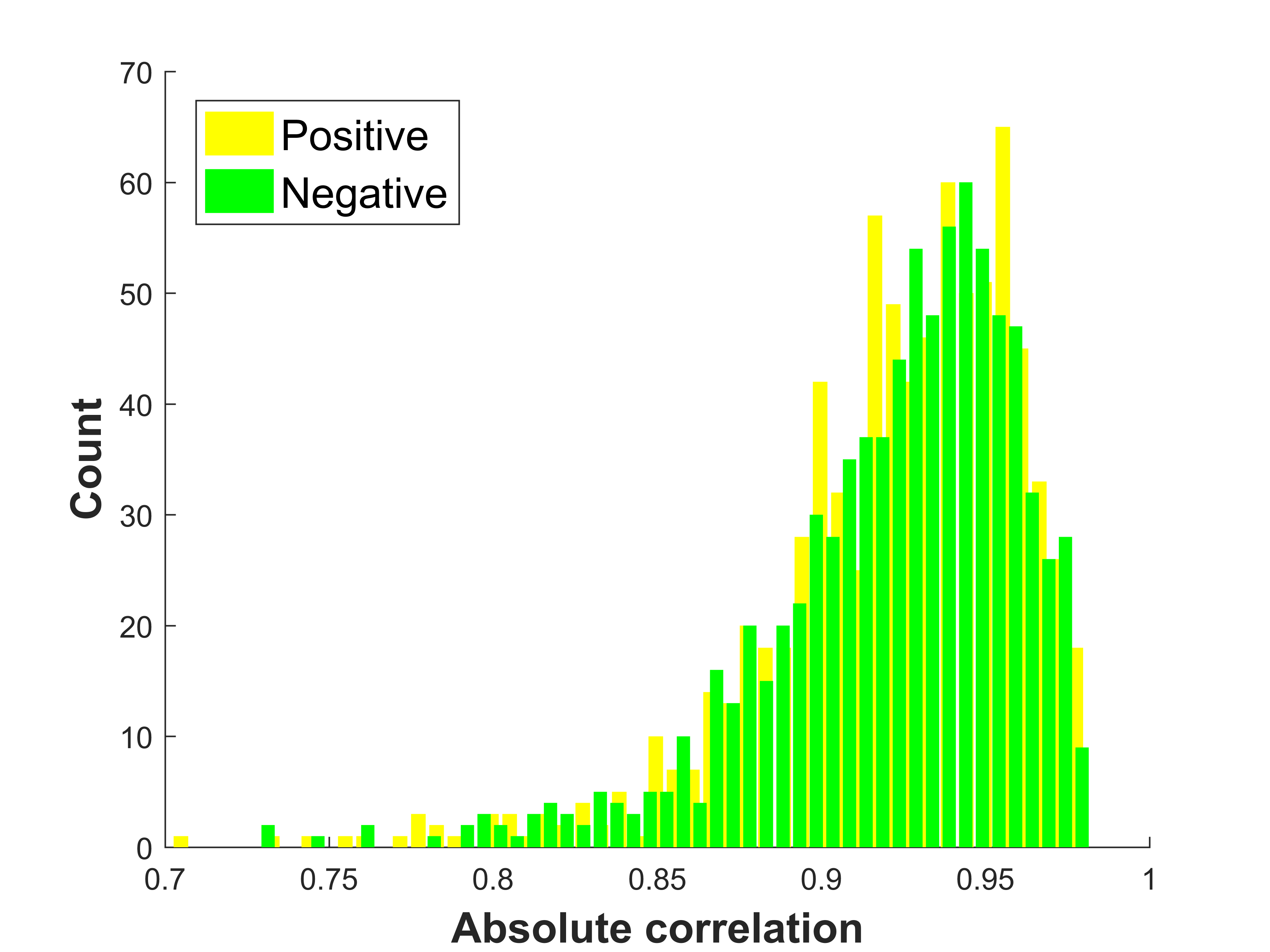}
        \caption{}
    \end{subfigure}
    \begin{subfigure}{0.45\textwidth}
        \includegraphics[width = \textwidth]{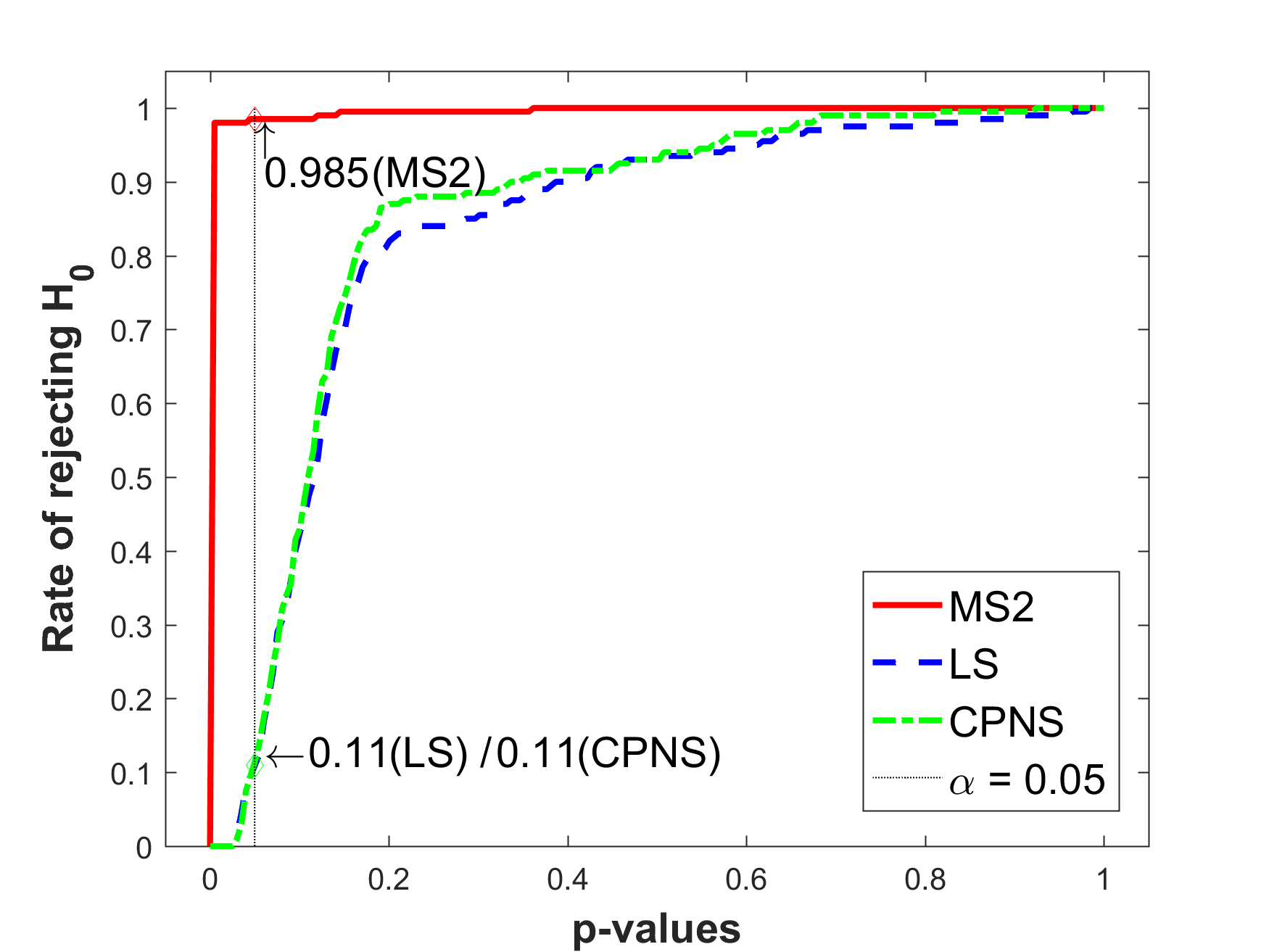}
        \caption{}
    \end{subfigure}
    \caption{(a) Two s-rep models of randomly-bent ellipsoids. Skeletal mesh points (blue) due to grid lines (purple) with spokes (green, red and cyan). (b) Directions-circles plot: Graphical display of MS2 parameter estimates (small-circles (yellow), $\hat\mu_0$ (blue dashed axis) compared with $\mu_0^*$ (black dashed)) laid over the data where different colors correspond to different observations. (c) Histogram of estimated ``horizontal'' correlation coefficients. (d) Empirical distributions of p-values from horizontal-dependence tests. See text for details.
     \label{fig:ellipsoids-all}}
	\end{center}
\end{figure}

\subsection{Data preparation}
For our purpose of validating the use of the proposed distributions, we use an s-rep data set, fitted from $30$ deformed ellipsoids; two samples from this data set are shown in Fig.~\ref{fig:ellipsoids-all}(a). This data set was previously used in \cite{Schulz2015} as a simple experimental representation of real human organs.
The data set was generated by ``physically bending'' a template ellipsoid about an axis $\mu^*_0 = (0,1,0)$ by random angles drawn from a normal distribution with standard deviation 0.4 (radians). Each deformed ellipsoid is then recorded as a 3-dimensional binary image. To mimic the procedure of fitting s-reps from, for example, medical resonance imaging of a real patient, s-reps with $74$ spokes were fitted to these binary images. (See \cite{Pizer2013} for details of the s-rep fitting.) As a preprocessing, we chose $K=58$ spoke vectors, excluding the vectors with very small total variation.

\subsection{Inference on the bending axis} 

Fitting the iMS1 distribution, we obtained axis estimate $\hat{\mu}_0^{(\mathrm{iMS1})} = (0.007,1.000,-0.008)$ (rounded to three decimal digits). Similarly, from the MS2 fitting, $\hat{\mu}_0^{(\mathrm{MS2})} = (0.006,1.000,0.006)$.
These estimates are virtually the same, only $0.6$ degrees away from the ground truth $\mu_0^*$.
Estimates of the concentric small-circles $\mathcal{C}(\hat{\mu}_0^{(\mathrm{MS2})},\hat\nu_j)$ for four choices of $j$ (the spoke index) are also shown in the top right panel of Fig.~\ref{fig:ellipsoids-all} in which $\hat{\mu}_0^{(\mathrm{MS2})}$ and $\mu_0^*$ are also shown. The test of axis (Hypothesis 2) discussed in Section~\ref{sec:testing} is applied to test $H_0:\, \mu_0 = \mu_0^*$ under the iMS2 model.
With the test statistic assuming the value $W_n=2.69$, compared to 5.99 (the 95th percentile of $\chi_2^2$), and the corresponding p-value of 0.26, we cannot reject that the s-rep spokes are rotated about the bending axis $\mu_0^*$.

\subsection{Inference on horizontal dependence}
An advantage of modeling the s-rep spoke directions by the MS2 distribution is the ability of perceiving and modeling the horizontal dependence among directions. 
As an exploratory step, we have collected the estimated correlation coefficients, computed from the approximate precision matrix $\widehat\Sigma^{-1}$, whose elements are $\hat\kappav_1$ and $\widehat\Lambdav$; see (\ref{eq:largeconcentration}). A histogram of $K(K-1)/2$ estimated correlation coefficients is plotted in the bottom left panel of Fig.~\ref{fig:ellipsoids-all}. Notably, pairs of spoke vectors from the same side (e.g. two spoke vectors in the ``left side'' of the ellipsoids in Fig.~\ref{fig:ellipsoids-all})  exhibit strong positive correlations, while those from the opposite sides exhibit strong negative correlations. The horizontal dependence is in fact apparent by the way data were generated (simultaneously bending all the spoke directions).

For large enough sample sizes, we could use the test of association discussed in Section~\ref{sec:testing} for testing $H_0:\, \bLambda = \textbf{0}$. Unfortunately, due to our small sample size, $n =30$, and the large number of parameters tested,  1653 ($= K(K-1)/2$), this is infeasible. Coping with this high-dimension, low-sample-size situation is beyond the scope of current paper, and we resort to choose only two spoke directions to test the dependence, but to repeat the testing for many different pairs of total $K = 58$ spokes. For each pair of spokes, the likelihood-ratio test produces a p-value for the pair. Investigating the empirical distribution of these p-values can provide a rough estimate of the power. In Fig.~\ref{fig:ellipsoids-all}(d), it can be seen that, at the significance level $0.05$, the MS2 test of dependence is indeed powerful, with a rejection rate of 97\%.

To provide some context to this rate, the MS2 test was compared with other natural choices of tests. We applied two methods that were previously used for s-rep data analysis: the composite principal nested spheres (CPNS), discussed in \cite{Pizer2013}, and the least-square (concentric) small-circle fitting method of \cite{Schulz2015}.

The CPNS-test is built as follows. First, the least-square small-circle is fitted to each marginal directions on $\mathbb{S}^2$. With an understanding that the axis of the fitted small-circle points to the north pole, the observations (say, $x_{i(k)}$ from the $i$th sample, $k$th spoke) are represented in spherical coordinates ($\theta_{i(k)},\phi_{i(k)}$).
For the purpose of testing ``horizontal associations'', we only keep the longitudinal coordinates $\theta_{i(k)}$. For any given pair ($k, \kappa$), Fisher's z-transformation is used to obtain the p-value in testing whether the correlation coefficient between $\theta_{i(k)}$ and $\theta_{i(\kappa)}$ is zero. We refer to this test procedure by a CPNS test.

An LS test procedure is defined similarly to the CPNS test, except that the first step of fitting individual small-circles is replaced by fitting \textit{concentric} small-circles.

These two tests were also conducted for the same combinations of spoke directions, and the empirical distributions of respective p-values are also plotted in Fig.~\ref{fig:ellipsoids-all}. These alternative tests appear to be too conservative, with rejection rates 11\% for the LS test, and 13.5\% for the CPNS test (at level 0.05).  Heuristically, the higher power of the MS2 tests is due to the superior fitting of the MS2 distribution. In particular, the ``horizontal angles'' predicted from the MS2 tend to be linearly associated, while those from the least-squares fit tend to be arbitrary. We refer to the online supplementary material for more simulation results.
All in all, using the MS2 distribution shows a clear advantage in modeling and testing the horizontal dependence of multivariate directions.

\section{Human knee gait analysis}\label{sec:applications.knee}

In biomechanical gait analysis, accurately modeling human knee motion during normal walking has a potential to differentiate diseased subjects from normal subjects. In particular, the axis of bending (of the lower leg toward the upper leg) is believed to be a key feature in the discrimination among the diseased and normal subjects \citep{Pierrynowski2010}. As a step towards the development of statistical tests for a  ``two-group axes difference," in this section we employ the proposed distributional families in modeling the bending motion of the knee.

The raw data set we use is obtained from a healthy volunteer and it is a time series of coordinates of markers planted at the volunteer's leg, recorded for 16 gait cycles.
For each time point, the directional vectors on $(\mathbb{S}^2)^5$ were computed to be the unit vectors between reference markers, as done in \cite{Schulz2015}.
These directional vectors are  \emph{horizontally dependent} of each other (as evidenced in Fig.~\ref{fig:knee.depS2}), which suggests that we may fit the MS2 distribution.

The first panel of Fig.~\ref{fig:knee.depS2} illustrates the result of MS2 fit to the all data points.
There, we superimpose the fitted concentric circles to the observed directional vectors, including their estimated axis, together with a hypothesized dominant bending axis $\mu_0^* = (0,1,0)^\top$, the left-right axis of the subject. The MS2 model seems to fit well with high estimated horizontal correlation coefficients. We, however, identify a strong evidence against using a single MS2. Specifically, as shown in Fig.~\ref{fig:knee.depS2} some directional vectors exhibit higher variations for a subset of time points.

In fact, the data consist of many inhomogeneous periods of the gait cycle. We focus on the ``swing'' and ``stance'' periods, and separately analyze subsampled data from each period. The MS2 model fits well for the swing period data (see Fig.~\ref{fig:knee.depS2}(b)), and the estimated axis $\hat{\mu}_{0}^{(\mbox{sw})}= (0.013, 1.000, 0.005)^\top$ is only 0.8 degrees away from the hypothesized axis, $\mu_{0}^*$. As expected, our likelihood ratio test procedure does not reject the null hypothesis $H_0: \mu_0 = \mu_0^*$, with p-value 0.16.
For the stance period data, excluding the highly-irregular directions shown as dark blue points in Fig.~\ref{fig:knee.depS2}(c), the MS2 model also fits well, and we confirm that the axis of bending for this period differs from $\mu_{0}^*$, with p-value less than $10^{-5}$. The estimated axis for the stance period is $\hat{\mu}_{0}^{(\mbox{st})} = (0.11, 0.994, 0.006)^\top$.


While the MS2 distribution was useful in making inference on the bending axis of partial knee motions, future work for this type of data lies in the development of a two-sample axis difference test.  
  
\begin{figure}[!t]
	\begin{center}
	\begin{subfigure}{0.3\textwidth}
		\includegraphics[width = \textwidth]{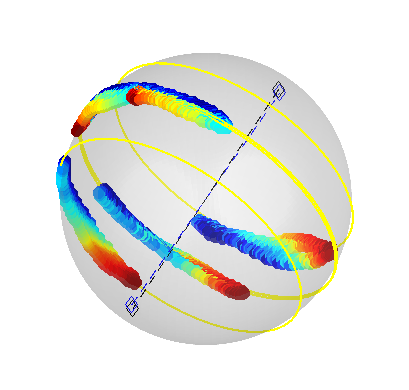}
		\caption{All data points}
	\end{subfigure}
	\begin{subfigure}{0.3\textwidth}
		\includegraphics[width = \textwidth]{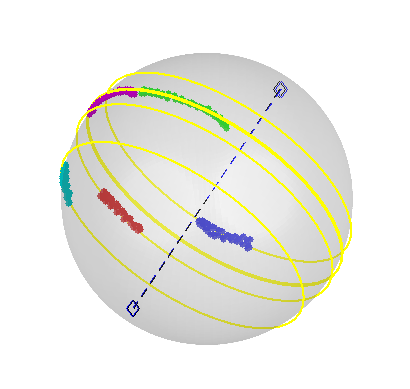}
		\caption{Swing period}
	\end{subfigure}
	\begin{subfigure}{0.3\textwidth}
		\includegraphics[width = \textwidth]{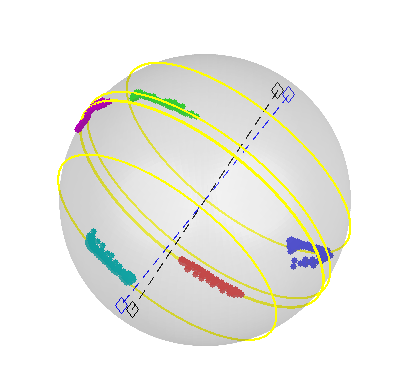}
		\caption{Stance period}
	\end{subfigure}
    \caption{Knee gait data:  Observed directional vectors overlaid with the hypothesized (black dashed)  and estimated  (blue dashed) axes as well as the MS2-fitted small circles. The directions along the north-most circle in the stance period exhibit a higher and irregular pattern of variation. In (a), colors code time indices. In (b) and (c), different colors represent different $\mathbb{S}^2$. The data are horizontally associated, with correlation coefficients ranging from 0.17 to 0.97 (in absolute values), which are significantly different from zero. \label{fig:knee.depS2}}
	\end{center}
\end{figure}

\section*{Acknowledgements}
Stephan Huckemann gratefully acknowledges funding by the Niedersachsen Vorab of the Volkswagen Foundation and DFG HU 1575/4.

\section*{Supporting information} Additional information for this article is available online. These include graphics of S2 densities (Appendix A1), powers of the test (Appendix A2), and additional simulation results (Appendix A3 and A4).

\bibliographystyle{asa}
\bibliography{library}

\section*{Address}

Department of Statistics, University of Pittsburgh, 1806 Wesley W. Posvar Hall, 230 Bouquet Street, Pittsburgh, PA 15260.
 \texttt{sungkyu@pitt.edu}

\appendix
\section*{Appendix}

We provide a technical lemma, referenced in Section \ref{sec:estimation.S1}, and proofs of Propositions \ref{prop:invariance} and \ref{prop:S1constant}.

\begin{lem} \label{lem:ambientMeanLemma}
If $X \sim $S1($\mu_0,\mu_1,\kappa_0,\kappa_1$), then $E(X)$ is a linear combination of $\mu_0$ and $\mu_1$.
\end{lem}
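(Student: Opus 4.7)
The plan is to derive this as an almost immediate consequence of the invariance property already established in Proposition~\ref{prop:invariance}(i), rather than by direct computation of the integral defining $E(X)$. The reflection matrix mentioned just after Proposition~\ref{prop:invariance} is tailor-made for this purpose.

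First I would set $V = \text{span}(\mu_0,\mu_1) \subset \Real^p$, which is 2-dimensional since $\nu = \mu_0^\top\mu_1 \in (-1,1)$ forces $\mu_0,\mu_1$ to be linearly independent. Choose any orthonormal basis $u_3,\ldots,u_p$ of the orthogonal complement $V^\perp$ and put $U = [u_3,\ldots,u_p]$. Define the reflection $B = I_p - 2UU^\top$. Then $B$ is orthogonal, $B$ fixes every vector in $V$ (in particular $B\mu_0 = \mu_0$ and $B\mu_1 = \mu_1$), and $B$ negates every vector in $V^\perp$.

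By Proposition~\ref{prop:invariance}(i) applied to this $B$, the random vectors $X$ and $BX$ have the same distribution on $\mathbb{S}^{p-1}$. Taking expectations in the ambient space $\Real^p$ yields
\begin{equation*}
E(X) = E(BX) = B\,E(X).
\end{equation*}
Decomposing $E(X) = v + w$ with $v \in V$ and $w \in V^\perp$, the identity $v + w = Bv + Bw = v - w$ forces $w = 0$. Hence $E(X) \in V = \text{span}(\mu_0,\mu_1)$, which is exactly the claim.

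There is no genuine obstacle here; the only small point to check is that the reflection $B$ indeed satisfies the hypotheses of Proposition~\ref{prop:invariance}(i), which is transparent from the construction. The same argument works verbatim for $Y \sim \text{S2}(\mu_0,\mu_1,\kappa_0,\kappa_1)$, so if desired the lemma could be stated for both kinds simultaneously, but only the S1 version is needed in Section~\ref{sec:estimation.S1}.
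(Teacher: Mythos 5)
Your argument is correct and is essentially the paper's own proof: the paper likewise invokes Proposition~\ref{prop:invariance}(i) with a matrix fixing $\mu_0$ and $\mu_1$ but moving the component of $E(X)$ outside $\mathrm{span}(\mu_0,\mu_1)$, and concludes that component must vanish. Your version merely instantiates $B$ as the concrete reflection $I_p-2UU^\top$ (the very example given after Proposition~\ref{prop:invariance}), which if anything makes the step $E(X)=BE(X)\Rightarrow w=-w$ slightly more explicit.
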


\begin{proof}[Proof of Lemma~\ref{lem:ambientMeanLemma}]
Suppose that for some $a,b,c \in \Real$, $v \in S^{p-1}$, $E(X) = a \mu_0 + b \mu_1 + c v$.
Then choose a $B \in O(p)$ such that $B\mu_0 = \mu_0$, $B\mu_1 = \mu_1$ but $B v \neq v$.
By Proposition~\ref{prop:invariance}(i), $BX \sim S1(\mu_0,\mu_1, \kappa_0,\kappa_1)$.
Thus $E(X) = E(BX)$, which in turn leads to
   $a \mu_0 + b \mu_1 + c v = a \mu_0 + b \mu_1 + c Bv$,
which is true only if $c = 0$. This gives the result.
\end{proof}

\begin{proof}[{Proof of Proposition~\ref{prop:invariance}}]
Assertions (i) and (iii) are trivial. (ii): Assume that $X\sim BX$. With the representation (\ref{eq:fbdensity}) this is equivalent with
$$ (B \gamma)^\top x = \gamma^\top x\mbox{ and } \big((B \mu_0)^\top x\big)^2 = (\mu_0^\top x)^2\mbox{ for all }x\in \mathbb{S}^{p-1}\,.$$
In consequence
$$2\nu\kappa_0 B\mu_0 + \kappa_1B\mu_1 = B\gamma = \gamma =2\nu\kappa_0\mu_0 + \kappa_1\mu_1\mbox{ and }B\mu_0 = \pm \mu_0\,.$$
Thus the case of $B\mu_0 =\mu_0$ yields at once $B\mu_1 = \mu_1$. In contrast, the case  $B\mu_0 =-\mu_0$ yields that
$ \|B\mu_1\|^2 = \|4\nu \frac{\kappa_0}{\kappa_1}\mu_0 +\mu_1\|^2 = 1 + 8\nu^2\kappa_0/\kappa_1 (1 + 2 \kappa_0/\kappa_1) >1$, which cannot be if $B$ is orthogonal.

Further, if $Y\sim BY$, since the exponent is again the sum of an even function in $x$ and an odd function, we have again that $B\mu_0 = \pm \mu_0$, yielding  $\|P_{\mu_0}x\| = \|P_{\mu_0}Bx\|$, due to orthogonality, $B$ preserves the space orthognal to $\mu_0$,  as well as
$$ 2\nu \kappa_0 \mu_0^\top x + \frac{\kappa_1}{\|P_{\mu_0}\mu_1\|\,\|P_{\mu_0}x\|} \mu_1^\top (I_p-\mu_0\mu_0^\top)x =  2\nu \kappa_0 (B\mu_0)^\top x + \frac{\kappa_1}{\|P_{\mu_0}\mu_1\|\,\|P_{\mu_0}x\|} \mu_1^\top (I_p-\mu_0\mu_0^\top)B^\top x\,.$$
As before, $B\mu_0 =\mu_0$ yields that $B\mu_1 =\mu_1$ and $B\mu_0 =-\mu_0$ would give that
$$ \big((B\mu_1)^\top - \mu_1^\top) x = 2\nu \left(\frac{2\kappa_0}{\kappa_1} \sqrt{1-\nu^2}\sqrt{1-(\mu^\top_0x)^2} - 1\right)\mu_0^\top x\mbox{ for all }x\in \mathbb{S}^{p-1}\,.$$
In particular, this would yield that $B\mu_1-\mu_1$ is a multiple of $\mu_0$, the factor, however, is not constant in $x$, a contradiction.
\end{proof}

\begin{proof}[Proof of Proposition \ref{prop:S1constant}]
For a given $h > 0$, let $\gamma = 2\kappa_0\nu\mu_0 + \kappa_1\mu_1$ and $A_h = \kappa_0\mu_0\mu_0^\top + hI_p$.  Then it is easy to see that
the S1 density  (\ref{eq:S1density}) can be expressed as the Fisher-Bingham form (\ref{eq:fbdensity}):
\begin{eqnarray*}
f_{\textrm{S1}}(x; \mu_0, \mu_1, \kappa_0, \kappa_1)= \alpha(\gamma, A_h) \exp\{\gamma^\top x  -x^\top A_h x  \}
\end{eqnarray*}
where $\alpha(\gamma, A_h)$ satisfies
\begin{equation}\label{eq:constant.S1.FB}
	a(\kappa_0, \kappa_1, \nu) = \alpha(\gamma, A_h) \exp\{ -\kappa_0\nu^2 + h\}.
\end{equation}

For the purpose of evaluating the value of $a(\kappa_0, \kappa_1, \nu)$, or equivalently $\alpha(\gamma, A_h)$ for the given value of $h$, one can assume without losing generality that $\mu_0 = (1,0,\ldots,0)^\top$ and $\mu_1 = (\nu, \sqrt{1-\nu^2},0,\ldots,0)$, so that $\gamma = (\nu(2\kappa_0 + \kappa_1), \kappa_1\sqrt{1-\nu^2}, 0, \ldots, 0)^\top$, and the vector of diagonal values of $A_h$ are $\lambda := (2(\kappa_0 + h), h,\ldots, h)$.
The $j$th element of $\xi$, in the statement of proposition, is then given by $\mu_j := \gamma_j / 2\lambda_j$. With these notations, Proposition 1 of \cite{Kume2005} gives
\[
	\alpha(\Psi, \mu) = 2\pi^{p/2} |A_h|^{-1/2} g(1) \exp\{ \xi^\top A_h\xi \}.
\]
Hence, by (\ref{eq:constant.S1.FB}), we have (\ref{eq:norm.const.FB2}).
\end{proof}

\newpage 
\includepdf[pages={1-16}]{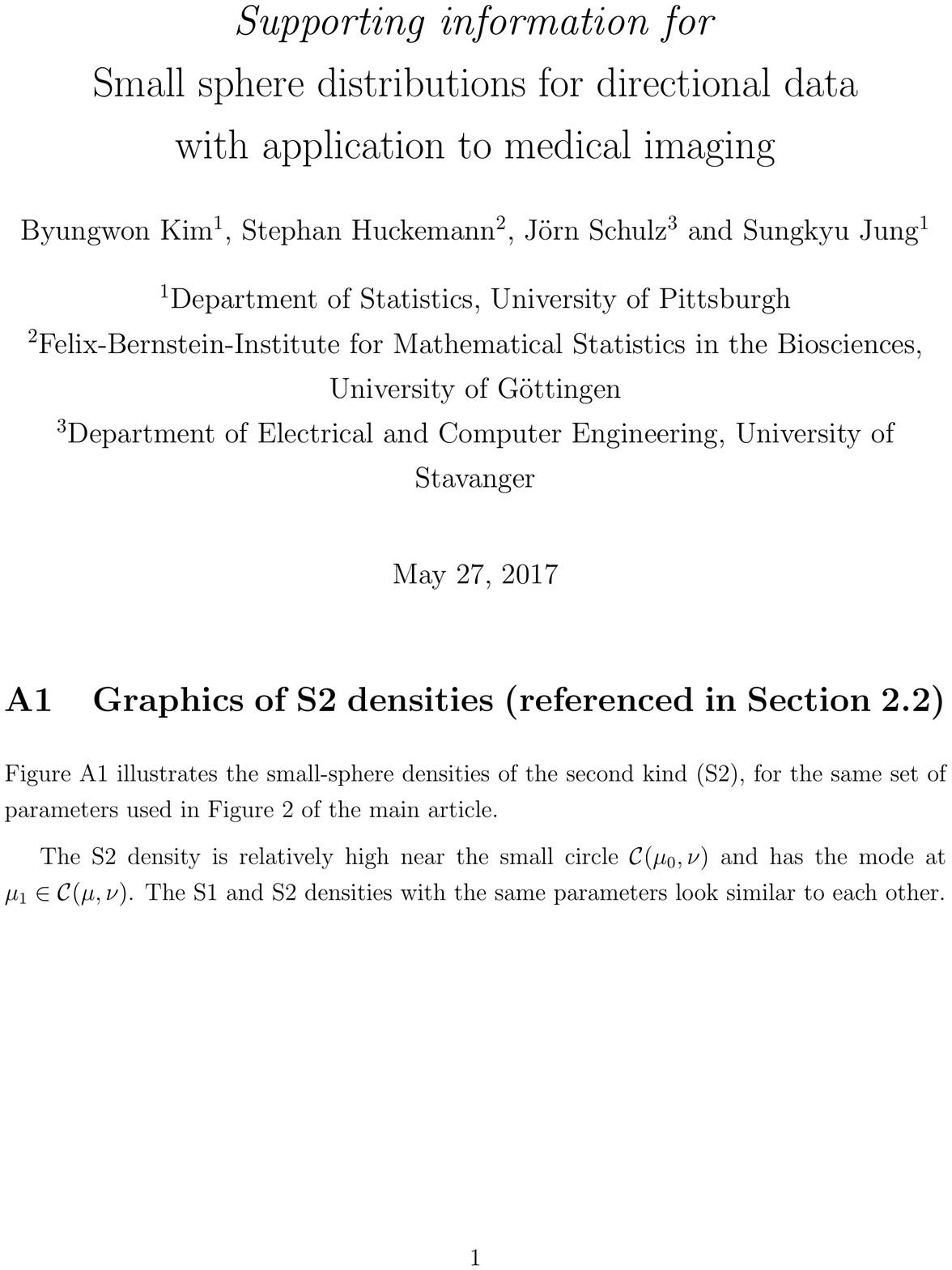}

\end{document}